\newcommand{\Xomit}[1]{ }
\newtheorem{theorem}{Theorem}
\newtheorem{corollary}[theorem]{Corollary}
\newtheorem{lemma}[theorem]{Lemma}
\newenvironment{proof}[1][Proof]{\textbf{#1.} }{\ \rule{0.5em}{0.5em}}
\newcommand{\eps}{\varepsilon}
\newcommand{\x}{\mathbf x}
\newcommand{\Y}{\mathbf Y}
\newcommand{\X}{\mathbf X}
\newcommand{\Z}{\mathbf Z}
\newcommand{\R}{{\cal{R}}}
\newcommand{\W}{{\cal{W}}}
\newcommand{\C}{{\cal{C}}}
\newcommand{\pr}{{\sc{BPCC}}}
\newcommand{\bpr}{{\sc{BPR}}}
\newcommand{\opt}{\mbox{\textsc{opt}}}
\newcommand{\alg}{\mbox{\textsc{alg}}}
\newcommand{\apx}{\mbox{\textsc{apx}}}
\begin{document}

\title{AFPTAS results for common variants of bin packing:\\ A  new method to handle the small items}

\author{Leah Epstein\thanks{Department of Mathematics, University of Haifa, 31905 Haifa, Israel.
{\tt lea@math.haifa.ac.il}.} \and Asaf Levin
\thanks{Chaya fellow. Faculty of Industrial Engineering and Management, The Technion,
32000 Haifa, Israel. {\tt levinas@ie.technion.ac.il.}}}
\date{}

\maketitle

\vspace{-0.7cm}
\begin{abstract}
We consider two well-known natural variants of bin packing, and
show that these packing problems admit asymptotic fully polynomial
time approximation schemes (AFPTAS). In bin packing problems, a
set of one-dimensional items of size at most 1 is to be assigned
(packed) to subsets of sum at most 1 (bins). It has been known for
a while that the most basic problem admits an AFPTAS. In this
paper, we develop methods that allow to extend this result to
other variants of bin packing. Specifically, the problems which we
study in this paper, for which we design asymptotic fully
polynomial time approximation schemes, are the following. The
first problem is {\it Bin packing with cardinality constraints},
where a parameter $k$ is given, such that a bin may contain up to
$k$ items. The goal is to minimize the number of bins used. The
second problem is {\it Bin packing with rejection}, where every
item has a rejection penalty associated with it. An item needs to
be either packed to a bin or rejected, and the goal is to minimize
the number of used bins plus the total rejection penalty of
unpacked items. This resolves the complexity of two important
variants of the bin packing problem. Our approximation schemes use
a novel method for packing the small items. Specifically, we
introduce the new notion of {\it windows}. A window is a space in
which small items can be packed, and is based on the space left by
large items in each configuration. The key point here is that the
linear program does not assign small items into specific windows
(located in specific bins), but only to types of windows. This new
method is the core of the improved running times of our schemes
over the running times of the previous results, which are only
asymptotic polynomial time approximation schemes (APTAS).
\end{abstract}

\section{Introduction}
\label{sec:intro} Classic bin packing
\cite{Ullman71,CsiWoe98,CoGaJo97,Gon32} is a natural and well
studied problem which has applications in problems of computer
storage, bandwidth allocation, stock cutting, transportation and
many other important fields. In the basic variant of this problem,
we are given $n$ items of size in $(0,1]$ which need to be
assigned to unit sized bins. Each bin may contain items of total
size at most 1, and the goal is to minimize the number of bins
used. Many variants of bin packing, coming from practical needs,
were studied \cite{Gon33,Gon34}. Below, we mention several such
variants that we address in this paper.

In various applications, such as storage of files on disks and
scheduling of jobs on bounded capacity processors, a bin can
contain only a limited number of items. This is the main
motivation for the study of the following variant of the bin
packing problem.  The {\sc bin packing problem with cardinality
constraints} (\pr) is defined as follows: The input consists of an
integer $k\geq 1$ (called the {\it cardinality bound}) and a set
of $n$ items $I=\{1,2,\ldots,n\}$, of sizes $1\geq s_1\geq s_2\geq
\cdots \geq s_n\geq 0$. The goal is to partition the items into
the minimum number of bins such that the total size of items in a
bin is at most 1, and the number of items in a bin is at most $k$.
The problem was introduced as early as in the 1970's by Krause,
Shen and Schwetman \cite{KSS75,KSS77}, and studied in a sequence
of papers \cite{CKP03,KP99,BCKK04,Epstein05}.

In other applications, such as bandwidth allocation, or any
application in storage that allows outsourcing, it is sometimes
possible to refuse to pack an item. This rejection clearly needs
to be compensated, and costs some given amount for each item.
This amount is called the {\it rejection penalty} of the item.
This situation can also occur in file servers where the items are
files, the bins are disks and rejection penalty of a file is the
cost of transferring it to be saved on an alternative media. In
other applications the rejection penalty refers to the penalty
caused by not serving a client. This motivates the following {\sc
bin packing problem with rejection} (\bpr).  The input to this
problem consists of $n$ items $I=\{ 1,2,\ldots ,n\}$, where item
$i$ has a size $s_i\in (0,1]$ and rejection penalty $r_i> 0$, and
the goal is to select a subset $A$ of $I$ (the set of accepted
items, i.e., items to be packed) and a partition of $A$ into
subsets $A_1,\ldots ,A_z$, for some integer $z$, such that
$\sum\limits_{i\in A_j} s_i \leq 1$ for all $j=1,2,\ldots ,z$, so
that $z+\sum\limits_{i\in I\setminus A} r_i$ is minimized. The
bin packing problem with rejection was introduced and studied by
He and D\'osa \cite{DoH05}. Further work on this problem can be
found in \cite{Eps06,BCH08}. Note that in this problem we assume
without loss of generality that all sizes of items are strictly
positive. If zero sized items exist, they could be all packed
into one
 bin, changing the cost of the solution by 1 in the case
 that all other items are rejected by this solution and without any change in the cost otherwise.

Note that both problems studied in the paper are generalizations
of classic bin packing. Classic bin packing is the special case of
\pr\ where $k=n$, and the special case of \bpr\ where all
rejection penalties are infinite.

For an algorithm ${\mathcal{A}}$, we denote its cost by
${\mathcal{A}}$ as well. The cost of an optimal offline algorithm
is denoted by \opt. We define the asymptotic approximation ratio
of an algorithm ${\mathcal{A}}$ is the infimum ${\mathcal{R}}\geq
1$ such that for any input, $ {\mathcal{A}} \leq{\mathcal{R}}
\cdot \mbox{\textsc{opt}} +c$, where $c$ is independent of the
input. If we enforce $c=0$, $\R$ is called the absolute
approximation ratio. An asymptotic polynomial time approximation
scheme is a family of approximation algorithms, such that for
every $\eps>0$ the family contains a polynomial time algorithm
with an asymptotic approximation ratio of $1+\eps$. We abbreviate
{\it asymptotic polynomial time approximation scheme} by APTAS
(also called an asymptotic PTAS). An asymptotic fully polynomial
time approximation scheme (AFPTAS) is an APTAS whose time
complexity is polynomial not only in the input size but also in
$1\over \eps$. If the scheme satisfies the definition above with
$c=0$, stronger results are obtained, namely, polynomial time
approximation schemes and a fully polynomial approximation
schemes, which are abbreviated as PTAS and FPTAS. Throughout the
paper, we use \opt\ to denote the cost of an optimal solution for
the original input, which is denoted by $I$, and we use \apx\ to
denote the cost of the solution returned by our schemes. For an
input $J$ we use $\opt(J)$ to denote the cost of an optimal
solution for the input $J$ (where $J$ is typically an adapted
input). Thus $\opt=\opt(I)$.

\noindent {\bf Previous results. \ } The classic bin packing
problem is known to admit an APTAS \cite{FerLue81} and an AFPTAS
\cite{KK82}.  Moreover, \pr\ and \bpr\ are problems that are known
to admit an APTAS \cite{CKP03,Eps06}. Specifically, the APTAS of
Caprara, Kellerer and Pferschy \cite{CKP03} generalizes the
methods of Fernandez de la Vega and Lueker \cite{FerLue81} where
items are rounded and grouped using linear grouping. The approach,
that is used to deal with items that are too small to be rounded
in this way, is not a greedy approach as in \cite{FerLue81} (which
fails in this case, as is demonstrated in \cite{CKP03}), but all
possible packings of large items are enumerated, and for each such
possibility, the small items are assigned to the already existing
bins and to new bins via a linear program. This last approach,
that is used also in \cite{Eps06}, where it is combined with
rounding of rejection penalties, results in an APTAS for each one
of the problems \pr\ and \bpr. Clearly, these schemes have large
running times due to enumeration. A different APTAS for \bpr, with
reduced running time, but that still uses some enumeration steps,
is given in \cite{BCH08}. Note that classic bin packing or any
generalization of it cannot be expected to admit an FPTAS or even
a PTAS, since approximating it within an absolute factor smaller
than $\frac 32$ is NP-hard (using a simple reduction from the {\sc
Partition} problem).

A problem which is dual to bin packing is the {\it bin covering}
problem. In this problem the goal is to maximize the number of
bins for which the total sizes of assigned items is at least 1.
This problem is known to admit an APTAS (by Csirik, Johnson and
Kenyon \cite{CJK01}) and an AFPTAS (by Jansen and  Solis-Oba
\cite{JS03}). The variant of this problem where a covered bin must
in addition to the constraint on the total size of items assigned
to it, must contain at least $k$ items, for a parameter $k$, we
get the problem of {\it bin covering with cardinality
constraints}. This last problem was considered in
\cite{EILbincover}, and was shown to admit an AFPTAS as well.  We
note that even though bin packing problems are the dual problems
of bin covering, the latter are maximization problems while the
former are minimization problems, and the nature of the problems
is very different. In particular, the methods of
\cite{EILbincover} as well as the methods of \cite{CJK01,JS03} are
not applicable for the problems considered in this paper, and
hence we needed to develop the new methods considered here.

We briefly survey the previous approaches used in the literature
for dealing with packing small items fractionally using a linear
program. One approach is to see all small items as fluid, without
distinguishing between different items. In this case, the small
items completely lose their identities, and the conversion process
of the fluid into items is performed in a later step, usually
greedily. The other approach which allows to keep the identities
of items simply allows to introduce constraints associated with
each small item separately, and the solution of the mathematical
program assigns each small item into a specific bin. The number of
constraints is linear in the number of items. It can be seen that
in the first approach there is no control whatsoever of the exact
allocation of specific items to specific bins, while the second
approach is rigid in the sense that all decisions must be made by
the mathematical program. Examples for the first approach can be
found as early as in the seminal work of Hochbaum and Shmoys
\cite{HS88} on scheduling uniformly related machines, in
approximation schemes for other scheduling problems (e.g.
\cite{AAWY98}), and in previous work on bin packing related
problems \cite{JS03,CJK01}. Examples of the second approach appear
in the previous work on the problems studied here
\cite{CKP03,Eps06}.

In this paper, we use a new and novel method of dealing with small
items which is an intermediate way between these two extreme
(previous) approaches. These items are packed using the
(approximated) solution of a linear program. To keep the running
time polynomial (in both the size of the input and $\eps$), the
linear program does not decide on the exact packing of these
items, but only on the type of a bin that they should join, where
a type of a bin is defined roughly according to the remaining size
in it after the packing of (rounded) large items, and in \pr, on
the cardinality constraint as well.


In this paper, we design an AFPTAS for each of the two problems,
\pr\ and \bpr. Studies of similar flavor was widely conducted for
other variants of bin packing and it is an established direction
of research, see e.g. \cite{JS05,Mur87,SY07}. The problems
studied in this paper were open in the sense that no AFPTAS is
known for them prior to this work.


We start the paper with Section \ref{methodss}, where we elaborate
on the methods used in this paper. In Sections \ref{pr} and
\ref{bpr} we describe the asymptotic approximation schemes for
\pr\ and \bpr, respectively, and prove their correctness. The
approximation scheme for \pr\ acts in two different ways
according to the value of $k$. The case of small $k$ is easier,
since in this case every bin has a relatively small number of
items. This simpler scheme is presented first, and it allows the
reader to get acquainted with the basic methods (but not with our
new methods for dealing with small items). The other schemes
presented in this paper, including the scheme for \pr\ with large
values of $k$, and the scheme for \bpr, require much more advanced
techniques. The methods that are employed, in order to obtain
these two schemes, are related, but each scheme requires different
specific ingredients, developed in this work, in order to solve
the problem it is meant for. Due to space limitations, we present
the easier case of Section \ref{pr}, and the AFPTAS for \bpr\ in
the Appendix.

Note that asymptotic fully polynomial approximation schemes,
unlike APTAS results, have practical running times especially if
one uses a method of solving approximately the linear program
which is faster than the ellipsoid method.  Such techniques are
available for packing problems.  Hence, AFPTAS can actually be
used to solve the problems they are developed for. Thus our
contribution, where we settle the complexity of the problems
studied here, is interesting not only from a theoretical point of
view, but also from the practical viewpoint.

\section{Methods}
\label{methodss} In this section we briefly describe the novel
methods that are used in this paper, side by side with adaptations
of well known methods, that are employed in this paper as well.

Linear grouping is a standard rounding method for bin packing
algorithms. It was first presented by Fernandez de la Vega and
Lueker \cite{FerLue81}. The main idea of this method is to round
the sizes of items into a small number of distinct sizes. Unlike
rounding methods for scheduling \cite{HS87}, the output sizes
resulting from the rounding must be representative sizes of real
item sizes in the input. In fact, this last method of rounding,
which rounds values from a given range (in our case, rejection
penalties) to a closest number among a fixed sized set of values,
such that no original value changes as a result by more than a
factor of $1+\eps$, is used in the paper as well in Section
\ref{bpr}. Both methods are typically used for sizes that are
large enough, where small sizes are treated separately. The
resulting set of large items has a small number of sizes
(usually, a function of $\eps$). On this set of sizes, valid
assignment configurations are defined, and a solution to a
packing problem is described as a set of bins packed according to
specific configurations.

In this paper, we introduce a new and novel method for treatment
of small items. These are typically items whose size is below some
threshold. Moreover, in Section \ref{bpr}, items of small
rejection penalty are seen as small as well. For the classic bin
packing problem, the treatment of such items is relatively easy.
After finding a solution, which is close enough to optimal
solution for the large items, small items can be added greedily
to the solution, using a simple packing heuristic such as Next
Fit or First Fit. As demonstrated by Caprara, Kellerer and
Pferschy \cite{CKP03}, using this approach for \pr\ leads to
approximation algorithms of high approximation ratio. That is,
finding an optimal packing of just the large items immediately
leads to poor performance for the complete input, no matter how
the small items are combined into the solution. In order to derive
an APTAS, i.e., an algorithm whose approximation ratio is within a
factor of $1+\eps$, but its running time is not necessarily
polynomial in $\frac 1{\eps}$, it is possible to enumerate a
large enough number of packings of the large items, and add the
small items to each one of them in some way (possibly in an
optimal fractional way). This results in a large number of
potential outputs, the best of which is actually given as the
output. On the other hand, if the goal is to design an AFPTAS,
where the running time must be polynomial in $\frac 1{\eps}$,
this approach cannot be successful. First, the number of packings
of large items is exponential in $\frac 1{\eps}$, and second, if
small items need to be added optimally and not greedily, it is
unclear whether it is possible to handle small items efficiently
since in their case, rounding of sizes is harmful. In summary, in
order to obtain an AFPTAS, we need a method which allows to
consider the small items in the linear program that seeks a
solution for the large items, but this linear program cannot
search for a complete and final solution for all items. We
introduce the new notion of {\it windows}. A window is a space in
which small items can be packed, and is based on the space left
by large items in each configuration. The key point here is that
the linear program does not assign small items into specific
windows (located in specific bins), but to types of windows.
Using rounding, we limit ourselves to a polynomial number of
window types. After the linear program has been solved, the small
items are assigned to specific windows based on the output. We
show that the items that cannot be assigned to windows (due to
fractional solutions, due to rounding, or due to the fact that
the windows reside in separate bins) can be packed separately
into new bins (or possibly rejected, in \bpr) with only a small
increase in the cost. In order to allow the packing of almost all
small items, in most cases studied here, their packing cannot
just be done greedily given the windows reserved for them. A
careful treatment that balances the load of small items in
similarly packed bins, taking into account not only the total
size of small items, but also their number, is required.

In order to find an AFPTAS and not an APTAS for each one of the
problems, the linear program, that is associated with the problem
on an adapted input (after rounding is applied, and our methods
for handling small items are invoked), cannot be solved exactly in
polynomial time (in $n$ and $\frac{1}{\eps}$). Therefore, we need
to solve it approximately. This is done via the column generation
technique of Karmarkar and Karp \cite{KK82}. The main idea is to
find an approximate solution of the dual linear program. For
that, we need to find a polynomial time separation oracle
(possibly, an approximate one) for each one of the dual programs.
This typically involves finding an approximate solution to a
knapsack type problem. The exact problem results from the exact
characteristics of the bin packing problem and the details of the
linear program used to solve it. We develop the linear programs
as well as the separation oracles for them in this paper.
Clearly, each problem results in a different linear program, a
different dual linear program, and a different separation oracle.
These separation oracles are based on application of fully
polynomial approximation schemes to variants of the knapsack
problem.

\section{An AFPTAS for \pr}
\label{pr} We fix a small value  $\eps < {1\over 2}$ such that
$1\over \eps$ is an integer. Our AFPTAS acts according to one of
two options, each of which is suitable for one case. In the first
case $k\leq {1\over \eps^2}$ and in the second case $k>{1\over
\eps^2}$. Recall that the set of items is denoted by $I$. We refer
to an item by its index, so $I=\{ 1,2,\ldots ,n\}$. The first
case can be found in the Appendix.

\subsection{Second case: $k>{1\over \eps^2}$}
We assume that $k < n$, otherwise there is no effective
cardinality constraint, and an AFPTAS for the problem follows from
the AFPTAS of Karmarkar and Karp \cite{KK82}. In this case we
partition the item set into {\it large items}, that is, items with
size at least $\eps$, and {\it small items} (all the non-large
items). We denote by $L$ the set of {\it large} items, and by
$S=I\setminus L$ the set of small items.  We apply linear grouping
to the large items (only).  That is, we partition the large items
into $1\over \eps^3$ classes $L_1,L_2,\ldots ,L_{1/\eps^3}$ such
that $\lceil |L| \eps^3 \rceil =|L_1| \geq |L_2| \geq \cdots \geq
|L_{1/\eps^3}|=\lfloor |L|\eps^3 \rfloor$, and such that if there
are two items $i,j$ with sizes $s_i>s_j$ and $i\in L_q$ and $j\in
L_p$ then $q \leq p$.  The two conditions uniquely define the
allocation of large items into classes up to the allocation of
equal-sized items. If $|L| < {1\over \eps^3}$, then instead of the
above partition, each large item has its own set $L_i$ such that
$L_1$ is an empty set, and for a large item $j$ we let $s'_j=s_j$
(i.e., we do not apply rounding in this case). We note that in
both cases (where we use the original partition of the items or
when $|L| < {1\over \eps^3}$) we have $|L_1| \leq 2\eps^3 |L|$.

Then, we round up the sizes of the items in $L_2,\ldots
,L_{1/\eps^3}$ as follows: For all values of $p=2,3,\ldots
,1/\eps^3$ and for each item $i\in L_p$, we let $s'_i=\max_{j\in
L_p} s_j$ to be the {\it rounded-up size of item $i$}.  For a
small item $i\in S$ we let $s'_i=s_i$.  The rounded-up instance
$I'$ consists of the set of items $I\setminus L_1$ where the size
of item $i$ is $s'_i$ for all $i$ ($k$ remains unchanged).  We
also define $L'=L\setminus L_1$ to be the set of large items in
the rounded-up instance. We have $\opt(I')\leq \opt$.

Given the rounded-up instance $I'$, we define  a {\it
configuration of large items of a bin} as a (possibly empty) set
of at most $k$ items of $L'$ whose total (rounded-up) size is at
most 1.  We denote the set of all configurations of large items by
$\cal C$. We denote the set of item sizes in $L'$ by $H$. For each
$v\in H$ we denote the number of items with size $v$ in $C$ by
$n(v,C)$, and  the number of items in $L'$ with size $v$ by $n(v)$
(where $n(v)=\lfloor |L| \eps^3 \rfloor$ or $n(v)=\lceil |L|
\eps^3 \rceil$, unless several classes are rounded to the same
size).

We denote the minimum non-zero size of an item by $s_{min}=
\min_{i\in S: s'_i\neq 0} s'_i$, and let
$$s'_{min}=\max\{\frac{1}{(1+\eps)^t}|t\in \mathbb{Z}, \
\frac{1}{(1+\eps)^t}\leq s_{min}\} \ . $$ We have $s'_{min}\leq
{s_{min}}$ and $s'_{min} > \frac{s_{min}}{1+\eps}$ and thus the
value $\log_{1+\eps}{\frac{1}{s'_{min}}}$ is polynomial in the
size of the input. We define the following set
$\W=\{(\frac{1}{(1+\eps)^t},a)|0\leq t \leq
\log_{1+\eps}{\frac{1}{s'_{min}}}+1, 0\leq a\leq k\}$. A {\it
window} is defined as a member of $\W$.  $\W$ is also called the
set of all possible windows. Then, $|{\cal W}| \leq n\cdot
(\log_{1+\eps} {1\over s'_{min}}+2) $, since the number of
possible values of the second component of a window is $k+1\leq
n$. For two windows, $w^1$ and $w^2$ where $w^i=(w^i_s,w^i_n)$
for $i=1,2$, we say that $w^1 \leq w^2$ if $w^1_n\leq w^2_n$ and
$w^1_s \leq w^2_s$.

Note that a bin, which is packed with large items according to
some configuration $C$, typically leaves space for small items.
For a configuration $C$, we denote the {\it main window of $C$} by
$w(C)=(w_s(C),w_n(C))$ (where $w_s(C)$ is interpreted as an
approximated available size for small items in a bin with
configuration $C$, and $w_n(C)$ is interpreted as the number of
small items that can be packed in a bin with configuration $C$
while still maintaining the cardinality constraint). More
precisely, a main window is defined as follows.  Assume that the
total (rounded-up) size of the items in $C$ is $s'(C)$. Let
$w_s(C)={1\over (1+\eps)^t}$ where $t$ is the maximum integer
such that $0 \leq t \leq \log_{1+\eps}{\frac{1}{s'_{min}}}+1$ and
that $s'(C) + {1\over (1+\eps)^t} \geq 1$, and $w_n(C)$ is the
difference between $k$ and  the number of large items in $C$
(i.e., $w_n(C)=k-\sum\limits_{v\in H} n(v,C)$). Note that in any
case, for a non-trivial input, $t$ can take at least three values.
The main window of a configuration is a window (i.e., belongs to
$\W$), but $\W$ may include windows that are not the main window
of any configuration. We note that $|{\cal W}|$ is polynomial in
the input size and in $1\over \eps$, whereas $|{\cal C}|$ may be
exponential in $1\over \eps$ (specifically, $|{\cal C}| \leq
({1\over \eps^3}+1)^{1/\eps}$, since in configuration there are
up to $1\over \eps$ large items of $|H|\leq{ 1\over \eps^3}$
sizes).  We denote the set of windows that are actual main windows
of at least one configuration by ${\cal {W}'}$. We first define a
linear program that allows the usage of any window in $\W$. After
we obtain a solution to this linear program, we modify it  so
that it only uses windows of ${\cal W}'$.

We define a {\it generalized configuration} $\tilde{C}$ as a pair
$\tilde{C}=(C,w=(w_s,w_n))$, for some configuration $C$ and some
$w\in \W$. A generalized configuration $\tilde{C}$ is a {\it
valid generalized configuration} if $w\leq w(C)$. The set of all
valid generalized configurations is denoted by $\tilde{\C}$.

For $W\in {\cal W}$ denote by $C(W)$ the set of generalized
configurations $\tilde{C}=(C,w=(w_s,w_n))$, such that $w=W$.
That is, $C(W)=\{\tilde{C}=(C,w)\in {\tilde{\cal C}} : W =w\}$.

We next consider the following linear program.   In this linear
program we have a variable $x_{\tilde{C}}$ denoting the number of
bins with generalized configuration $\tilde{C}$, and variables
$Y_{i,W}$ indicating if the small item $i$ is packed in a window
of type $W$ (the exact instance of this window is not specified in
the solution of the linear program).

\begin{eqnarray}
\min & \sum\limits_{\tilde{C}\in {\tilde{\cal C}}} x_{\tilde{C}}& \nonumber \\
s.t.& \sum\limits_{\tilde{C}=(C,w)\in {\tilde{\cal C}}} n(v,C)
x_{\tilde{C}} \geq n(v) &
\forall v\in H\label{cons1}\\
& \sum\limits_{W\in {\cal W}} Y_{i,W} \geq 1& \forall i\in S\label{cons2}\\
& w_s\cdot \sum\limits_{{\tilde{C}}\in C(W)} x_{\tilde{C}} \geq \sum\limits_{i\in S} s'_i\cdot Y_{i,W} & \forall W=(w_s,w_n)\in {\cal W} \label{cons3} \\
& w_n\cdot \sum\limits_{{\tilde{C}}\in C(W)} x_{\tilde{C}}\geq \sum\limits_{i\in S} Y_{i,W} & \forall W=(w_s,w_n)\in {\cal W}\label{cons4}\\
& x_{\tilde{C}} \geq 0& \forall {\tilde{C}}\in {\tilde{\cal
C}},\nonumber
\\ & Y_{i,W} \geq 0&
 \forall W\in {\cal W}, \forall i\in S.\nonumber
\end{eqnarray}
Constraints (\ref{cons1}) and (\ref{cons2}) ensure that each item
(large or small) of $I'$ will be considered. The large items will
be packed by the solution, and the small items would be assigned
to some type of window (but not to a specific location).
Constraints (\ref{cons3}) ensure that the total size of the small
items that we decide to pack in window of type $W$ is not larger
than the total available size in all the bins that are packed
according to a generalized configuration, whose second component
is a window of type $W$. Similarly, the family of constraints
(\ref{cons4}) ensures that the total number of the small items
that we decide to pack in a window of type $W$ is not larger than
the total number of small items that can be packed (in accord with
the cardinality constraint) in all the bins whose generalized
configuration of large items induces a window of type $W$. The
linear relaxation assumes, in particular, that small items can be
assigned fractionally to windows, that is, the small items leave
no gaps. We later show how to construct a valid allocation of
small items, leaving a small enough number of small items, whose
total size is small enough as well, unpacked. We further show how
to deal with these unpacked items. We note that $\opt(I')$ implies
a feasible solution to the above linear program that has the cost
$\opt(I')$, since the packing of the small items clearly satisfies
the constraints (\ref{cons3}) and (\ref{cons4}), and the packing
of large items satisfies the constraints (\ref{cons1}). Moreover,
it implies a solution to the linear program in which all variables
$x_{\tilde{C}}$, that correspond to generalized configurations
$\tilde{C}=(C,w)$ for which $w$ is not the main window of $C$, are
equal to zero, and all variables $Y_{i,w}$ where $w \notin \W'$
are equal to zero as well.

We invoke the column generation technique of Karmarkar and Karp
\cite{KK82} as follows.  The above linear program has an
exponential number of variables and a polynomial number of
constraints (neglecting the non-negativity constraints). Instead
of solving the linear program we solve its dual program (that has
a polynomial  number of variables and an exponential number of
constraints). The variables $\alpha_v$ correspond to the item
sizes in $H$. The variables $\beta_i$ correspond to the small
items. The intuitive meaning of the variables can be seen as
weights assigned to these items.  For each $W\in {\cal W}$ we have
a pair of dual variables $\gamma_W$ and $\delta_W$.  Using these
dual variables, the dual linear program is as follows.

\begin{eqnarray}
\max & \sum\limits_{v \in H} n(v) \alpha_v+ \sum\limits_{i\in S} \beta_i&\nonumber \\
s.t. & \sum\limits_{v\in H} n(v,C) \alpha_v +
 w_s\gamma_w+w_n\delta_w
\leq 1& \forall \tilde{C}=(C,w=(w_s,w_n))\in {\tilde{\cal C}}\label{con1}\\
& \beta_i - s'_i\gamma_W-\delta_W \leq 0&\forall i\in S, \ \forall W\in {\cal W}\label{con2}\\
& \alpha_v \geq 0& \forall v\in H \nonumber \\& \beta_i \geq 0 &
\forall i \in S \nonumber \\ & \gamma_W,\delta_W \geq 0& \forall
W\in {\cal W}.\nonumber
\end{eqnarray}

First note that there is a polynomial number of constraints of
type (\ref{con2}), and therefore we clearly have a polynomial time
separation oracle for these constraints.  If we would like to
solve the above dual linear program (exactly) then using the
ellipsoid method we need to establish the existence of a
polynomial time separation oracle for the constraints
(\ref{con1}).  However, we are willing to settle on an
approximated solution to this dual program. To be able to apply
the ellipsoid algorithm, in order to solve the above dual problem
within a factor of $1+\eps$, it suffices to show that there exists
a polynomial time algorithm (polynomial in $n$, $1\over \eps$ and
$\log \frac 1{s'_{min}}$) such that for a given solution
$a^*=(\alpha^*,\beta^*,\gamma^*,\delta^*)$ decides whether $a^*$
is a feasible dual solution (approximately). That is, it either
provides a generalized configuration $\tilde{C}=(C,w=(w_s,w_n))
\in \tilde{\C}$ for which $\sum\limits_{v\in H} n(v,C) \alpha^*_v
+ w_s\gamma^*_w+w_n \delta^*_w> 1$, or outputs that an approximate
infeasibility evidence does not exist, that is, for all
generalized configurations $\tilde{C}=(C,w=(w_s,w_n)) \in
\tilde{\C}$, $\sum\limits_{v\in H} n(v,C) \alpha^*_v +
w_s\gamma^*_w+w_n \delta^*_w \leq 1+\eps$ holds. In such a case,
$a^*\over 1+\eps$ is a feasible dual solution that can be used.
 Such a configuration $\tilde{C}$ can be found
by the following procedure:  For each $W=(w_s,w_n) \in {\cal W}$
we look for a configuration $C \in \C$ such that $(C,W)$ is a
valid generalized configuration, and $\sum\limits_{v\in H} n(v,C)
\alpha^*_v$ is maximized. If a configuration $C$ that is indeed
found, the generalized configuration, whose constraint is checked,
is $(C,W)$. To find $C$, we invoke an FPTAS for the KCC problem
with the following input: The set of items is $H$ where for each
$v\in H$ there is a volume $\alpha^*_v$ and a size $v$, the goal
is to pack a multiset of the items,  so that the total volume is
maximized, under the following conditions. The multiset should
consist of at most $k-w_n$ items (taking the multiplicity into
account, but an item can appear at most a given number of times).
The total (rounded-up) size of the multiset should be smaller than
$1-\frac{w_s}{1+\eps}$, unless $w_s<s'_{min}$, where the total
size should be at most 1 (in this case, the window leaves space
only for items of size zero). Since the number of applications of
the FPTAS for the KCC problem is polynomial (i.e., $|{\cal W}|$),
this algorithm runs in polynomial time.  If it finds a solution,
that is, a configuration $C$, with total volume greater than
$1-w_s\gamma^*_W-w_n \delta^*_W$, we argue that $(C,W)$ is indeed
a valid generalized configuration, and this implies that there
exists a generalized configuration, whose dual constraint
(\ref{con1}) is violated. By the definition of windows, the
property $w_s<s'_{min}$ is equivalent to
$w_s=\frac{s'_{min}}{1+\eps}$, which is the smallest size of
window (and the smallest sized window forms a valid generalized
configuration with any configuration, provided that the value of
$w_n$ is small enough). Since $C$ has at most $k-w_n$ items, the
main window of $C$ in this case is no smaller than $(w_s,w_n)$ and
therefore, the generalized configuration $(C,W)$ is valid. If
$w_s\geq s'_{min}$, recall that the main window of $C$,
$(w_s(C),w_n(C))$ is chosen so that
$s'(C)+{w_s(C)} \geq 1$, and that $C$ is chosen by the algorithm
for KCC so that $s'(C)<1-\frac{w_s}{1+\eps}$. We get $1-w_s(C)
\leq s'(C)<1-\frac{w_s}{1+\eps}$ and therefore $w_s <
(1+\eps)w_s(C)$, i.e.,  $w_s \leq w_s(C)$ (since the sizes of
windows are integer powers of $1+\eps$). Since $C$ contains at
most $k-w_n$ items, we have  $w_n(C) \geq w_n$ and so we conclude
that $W \leq (w_s(C),w_n(C))$, and $(C,W)$ is  a valid generalized
configuration.
Thus in this case we found that this solution is a configuration
whose constraint in the dual linear program is not satisfied, and
we can continue with the application of the ellipsoid algorithm.

Otherwise, for any window $W=(w_s,w_n)$ and any configuration $C$
of total rounded-up size less than $1-\frac{w_s}{1+\eps}$ (or at
most 1, if $w_s<s'_{min}$), with at most $k-w_n$ items, the total
 volume is at most $(1+\eps)(1-w_s\gamma^*_W-w_n \delta^*_W)\leq
(1+\eps)-w_s\gamma^*_W-w_n \delta^*_W $. We prove that in this
case, all the constraints of the dual linear program are
satisfied by the solution $a^*\over 1+\eps$. Consider a valid
generalized configuration
$\tilde{C}=(C,(\tilde{w}_s,\tilde{w}_n))$. We have
$(\tilde{w}_s,\tilde{w}_n)\leq (w_s(C),w_n(C))$, where
$(w_s(C),w_n(C))$ is the main window of $C$. If $w_s(C)<s'_{min}$,
then $\tilde{w}_s=w_s(C)$. Since $s'(C)\leq 1$ for any
configuration, and $\tilde{w}_n\leq w_n(C)$, where $k-w_n(C)$ is
the number of items in $C$, $C$ is a possible configuration to be
used with the window $(\tilde{w}_s,\tilde{w}_n)$ in the
application of the FPTAS for KCC. Assume next that $\tilde{w}_s
<1$, then when the FPTAS for KCC is applied on
$W=(\tilde{w}_s,\tilde{w}_n)$, $C$ is a configuration that is
taken into account for $W$ since
$s'(C)<1-\frac{w_s(C)}{1+\eps}\leq 1-\frac{\tilde{w}_s}{1+\eps}$,
where the first inequality holds by definition of $w_s(C)$, and
$C$ has at most $k-w_n(C) \leq k-\tilde{w}_n$ items. If
$\tilde{w}_s =1$ then $1 \geq w_s(C) \geq \tilde{w}_s =1$, so
$w_s(C)=1$. A configuration $C_1$ that contains at least one large
item satisfies $s'(C_1)\geq \eps$, so $s'(C_1)+\frac{1}{1+\eps}
\geq \frac{1+\eps+\eps^2}{1+\eps}>1$. Therefore if the main window
of a configuration is of size 1, this configuration is empty. We
therefore have that $C$ is an empty configuration, thus $s'(C)=0$
and $\tilde{w}_n \leq w_n(C)=k$. This empty configuration $C$ is
considered with any possible window.

We denote by $(x^*,Y^*)$ the solution to the primal linear program
that we obtained. Since its cost is a $(1+\eps)$ approximation for
the optimal solution to the linear program, we conclude that
$\sum\limits_{\tilde{C}\in {\tilde{\cal C}}}x^*_{\tilde{C}} \leq
(1+\eps)\opt(I')$.

We modify the solution to the primal linear program, into a
different feasible solution of the linear program, without
increasing the objective function. We create a list of generalized
configurations whose $x^*$ component is positive.
From this list of generalized configurations, we find a list of
windows that are the main window of at least one  configuration
induced by a generalized configuration in the list. This list of
windows is a subset of $\cal{W}'$ defined above. We would like the
solution to use only windows from $\cal{W}'$.

The new solution will have the property that any non-zero
components of $x^*$, $x^*_{\tilde{C}}$ corresponds to a
generalized configuration $\tilde{C}=(C,w)$, such that $w\in
\cal{W}'$. We still allow generalized configurations
$\tilde{C}=(C,w)$ where $w$ is not the main window of $C$, as long
as $w\in \W'$. This is done in the following way. Given a window
$w' \notin \W'$, we define $X_{w'}=\sum\limits_{\tilde{C''}\in
C(w')}x^*_{\tilde{C''}}$. The following is done in parallel for
every generalized configuration $\tilde{C'}=(C,w')$, where
$w'\notin \W'$ and such that $x^*_{\tilde{C'}}>0$, where the main
window of $C$ is $w \geq w'$ (but $w'\neq w$). We let
$\tilde{C}=(C,w)$. The windows allocated for small items need to
be modified first, thus an amount of
$\frac{x^*_{\tilde{C'}}}{X_{w'}}Y_{i,w'}$ is transferred from
$Y_{i,w'}$ to $Y_{i,w}$. We modify the values $x^*_{\tilde{C'}}$
and $x^*_{\tilde{C}}$ as follows. We increase the value of
$x^*_{\tilde{C}}$ by an additive factor of $x^*_{\tilde{C'}}$ and
let $x^*_{\tilde{C'}}=0$.

To show that the new vector $(x^*,Y^*)$ still gives a feasible
solution of the same value of objective function, we consider the
modifications. The sum of components of $x^*$ does not change at
all in the above process, thus the value of the objective function
is the same. Moreover, for every configuration $C$, the sum of
components $x^*$, that correspond to  generalized configurations
whose configuration of large items is $C$, does not change. Thus
the constraints (\ref{cons1}) still hold. We next consider the
constraint (\ref{cons2}) for $i$, for a given small item $i\in S$.
Since the sum of variables $Y^*_{i,W}$ does not change, this
constraint still holds.

As for constraints (\ref{cons3}) and (\ref{cons4}), for a window
$w \notin \cal{W'}$, the right hand side of each such constraint
became zero. On the other hand, for windows in $\cal{W'}$, every
increase in some variable $x^*_{\tilde{C}}$ for
$\tilde{C}=(C,w=(w_s,w_n))$, that is originated in a decrease of
$x^*_{\tilde{C'}}$ for $\tilde{C'}=(C,w'=(w'_s,w'_n)$ is
accompanied with an increase of
$\frac{x^*_{\tilde{C'}}}{\sum\limits_{\tilde{C''}\in
C(w')}x^*_{\tilde{C''}}}Y^*_{i,w'}=\frac{x^*_{\tilde{C'}}}{X_{w'}}Y^*_{i,w'}
$ in $Y^*_{i,w}$, for every $i\in S$, that is, an increase of
$\sum\limits_{i\in S}\frac{x^*_{\tilde{C'}}}{X_{w'}} s'_i\cdot
Y^*_{i,w'}$ in the right hand size of the constraint (\ref{cons3})
for $w$, and an increase of $w_s \cdot x^*_{\tilde{C'}}$ in the
left hand side. Since we have $w_s \cdot X_{w'} \geq w'_s \cdot
X_{w'} \geq \sum\limits_{i\in S} s'_i\cdot Y^*_{i,w'}$ before the
modification occurs (since constraint (\ref{cons3})  for the
window $w'$ holds for the solution before the modification), we
get that the increase of the left hand side is no smaller than the
increase in the right hand side. There is an increase of
$\sum\limits_{i\in S}\frac{x^*_{\tilde{C'}}}{X_{w'}} \cdot
Y_{i,w'}$ in the right hand size of the constraint (\ref{cons4})
for $w$, and an increase of $w_n \cdot x^*_{\tilde{C'}}$ in the
left hand side. Since we have $w_n \cdot X_{w'} \geq w'_n \cdot
X_{w'} \geq \sum\limits_{i\in S} Y^*_{i,w'}$, we get that the
increase of the left hand side is no smaller than the increase in
the right hand side.

Now, we can delete the constraints of (\ref{cons3}) and
(\ref{cons4}) that correspond to windows in ${\cal W} \setminus
{\cal W}'$.  In the resulting linear program we consider a basic
solution that is not worse than the solution we obtained above.
Such  a basic solution can be found in polynomial time.  We denote
this basic solution by $(\x^*,\Y^*)$.

We apply several steps of rounding to obtain a feasible packing
of the items into bins.  We first round up $\x^*$.  That is,
denote by $\hat{x}$ the vector such that
$\hat{x}_{\tilde{C}}=\lceil \x^*_{\tilde{C}} \rceil$ for all
$\tilde{C}\in {\tilde{\cal C}}$. Moreover, each small item $i\in
S$ such that $(\Y^*_{i,W})_{W\in {\cal W}}$ is fractional, is
packed using a dedicated bin. We modify the value of
$\hat{x}_{\tilde{C}}$ for $\tilde{C}$ that corresponds to an
empty configuration $C$, together with the window $(1,k)$, to
reflect the additional bins that accommodate the small items that
were previously packed fractionally. We modify the values
$(\Y^*_{i,W})_{W\in {\cal W}}$ so that every item $i$ which is
packed into a new bin has $\hat{Y}_{i,W}=0$ for all $W$, except
for $W=(1,k)$ for which $\hat{Y}_{i,W}=1$. For all other
variables $Y_{i,W}$ we define $\hat{Y}_{i,W}=\Y^*_{i,W}$. We next
bound the increase in the cost due to this rounding.

\begin{lemma}\label{l1}
$\sum\limits_{\tilde{C}\in {\tilde{\cal C}}}
\hat{x}_{\tilde{C}}\leq \sum\limits_{\tilde{C}\in {\tilde{\cal
C}}}\x^*_{\tilde{C}} + |H|+2|{\cal W'}|$.
\end{lemma}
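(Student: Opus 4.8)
The plan is to bound the number of strictly positive entries of $\x^*$ by using the fact that $(\x^*,\Y^*)$ is a \emph{basic} feasible solution of the linear program obtained after deleting the constraints of types~(\ref{cons3}) and~(\ref{cons4}) associated with windows outside ${\cal W}'$, and then to observe that the two rounding operations that turn $\x^*$ into $\hat x$ raise the objective by at most the number of these positive entries plus the number of small items that are packed fractionally. So the whole lemma comes down to a counting argument about the basic solution.

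Concretely, I would first count the constraints of this reduced linear program other than the non-negativity constraints: there are $|H|$ of type~(\ref{cons1}) (one per size in $H$), $|S|$ of type~(\ref{cons2}) (one per small item), and, after the deletion, exactly $|{\cal W}'|$ of type~(\ref{cons3}) and $|{\cal W}'|$ of type~(\ref{cons4}); hence $|H|+|S|+2|{\cal W}'|$ in total. Since a basic feasible solution is a vertex of the polyhedron cut out by these constraints together with the non-negativity constraints, and a vertex is pinned down by a full-rank family of tight constraints, the number of coordinates of $(\x^*,\Y^*)$ that are strictly positive is at most $|H|+|S|+2|{\cal W}'|$. The key point is that the $|S|$ constraints of type~(\ref{cons2}) are, in effect, paid for by the $Y$-variables: for each $i\in S$ the constraint $\sum_{W}\Y^*_{i,W}\ge 1$ forces at least one positive variable $\Y^*_{i,W}$, and for each small item that is packed fractionally --- i.e.\ whose vector $(\Y^*_{i,W})_{W}$ is supported on at least two windows --- it forces at least two. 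Writing $f$ for the number of fractionally packed small items, this shows that at least $|S|+f$ of the positive coordinates are $Y$-variables, hence at most $|H|+2|{\cal W}'|-f$ of them are $x$-variables.

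Finally I would assemble the bound. The first rounding, $\hat x_{\tilde C}=\lceil \x^*_{\tilde C}\rceil$, leaves the zero entries of $\x^*$ untouched and raises each positive entry by less than $1$, so it increases $\sum_{\tilde C}\hat x_{\tilde C}$ by at most the number of positive $x$-variables, i.e.\ by at most $|H|+2|{\cal W}'|-f$. The second modification opens one dedicated bin for each of the $f$ fractionally packed small items, adding these to the entry of $\hat x$ indexed by the empty configuration together with the window $(1,k)$, and so increases the sum by a further $f$. Adding the two contributions gives $\sum_{\tilde C\in\tilde{\C}}\hat x_{\tilde C}\le\sum_{\tilde C\in\tilde{\C}}\x^*_{\tilde C}+|H|+2|{\cal W}'|$, as claimed. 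I expect the only real care to be needed in the counting step --- confirming that precisely the stated constraints survive the deletion, and that a fractionally packed small item genuinely contributes two positive $Y$-variables so that the $|S|$ term cancels --- while the rest is routine.
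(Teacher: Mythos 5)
Your proposal is correct and follows essentially the same route as the paper: count the $|H|+|S|+2|{\cal W}'|$ inequality constraints of the reduced LP, use the fact that a basic solution has at most that many positive coordinates, charge one positive $Y$-variable to each small item (and an extra one to each item split over two or more windows), and split the remaining budget of $|H|+2|{\cal W}'|$ between the ceiling of $\x^*$ and the dedicated bins for fractionally assigned small items. The only difference is bookkeeping: the paper bounds the number of fractional $x$-components plus fractionally split items directly by $|H|+2|{\cal W}'|$, while you bound the positive $x$-components by $|H|+2|{\cal W}'|-f$ and add $f$ back, which yields the same estimate.
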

\begin{proof}
Consider now the primal linear program, where constraints
(\ref{cons3}) and (\ref{cons4}) exist only for windows in $\W'$,
the variables $x_{\tilde{C}}$ exist only for generalized
configurations $\tilde{C}=(C,w)$ where $w \in \W'$, and the
variables $Y_{i,W}$ exists only for $W\in\W'$. The basic solution
$(\x^*,\Y^*)$ is a feasible solution for this linear program. In
the primal linear program there are $|H|+2|{\cal W'}|+|S|$
inequality constraints, and hence in a basic solution there are
at most $|H|+2|{\cal W'}|+|S|$ basic variables. For every $i\in
S$, there is at least one $W'$ such that $Y_{i,W'}$ is a basic
variable, and therefore the number of basic variables from the $x$
components and additional basic variables from the $Y$ components
is at most $|H|+2|{\cal W'}|$. Hence the sum of the number of
fractional components among the $x_{\tilde{C}}$ variables, and
the number of small items such that the vector $(\Y^*_{i,W})_W$
contains more than one non-zero component is at most $|H|+2|{\cal
W'}|$. This is an upper bound on the difference in the objective
values of the two solutions and the claim follows.
\end{proof}


Our scheme returns a solution that packs $\hat{x}_{\tilde{C}}$
bins with configuration $\tilde{C}$. Each large item of the
rounded-up instance is replaced by the corresponding item of
$I$.  We clearly use at most $\sum\limits_{C\in {\cal C}}
\hat{x}_C$ bins in this way.
 We
next pack each item of $L_1$ by its own bin (if $L_1$ is
non-empty). We denote the resulting solution by $SOL_{large}$.

\begin{lemma}
The cost of $SOL_{large}$ is at most $\sum\limits_{\tilde{C}\in
{\tilde \C}}\hat{x}_{\tilde{C}} + \eps\opt$.
\end{lemma}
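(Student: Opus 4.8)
The plan is to bound the cost of $SOL_{large}$ by accounting separately for the bins opened for configurations and the bins opened for the items of $L_1$. The solution $SOL_{large}$ uses $\sum_{\tilde{C}\in\tilde{\C}}\hat{x}_{\tilde{C}}$ bins to pack the (un-rounded) large items according to the generalized configurations, and one additional bin for each item of $L_1$, so its cost is exactly $\sum_{\tilde{C}\in\tilde{\C}}\hat{x}_{\tilde{C}}+|L_1|$. Hence the entire content of the lemma reduces to showing $|L_1|\le\eps\cdot\opt$. (If $L_1$ is empty the bound is immediate, so assume $L_1\neq\emptyset$, which in particular means the original partition into $1/\eps^3$ classes was used and $|L|\ge 1/\eps^3$.)

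For the bound $|L_1|\le\eps\cdot\opt$, I would first recall the estimate already noted in the excerpt, $|L_1|\le 2\eps^3|L|$, which comes from the linear grouping: $|L_1|=\lceil|L|\eps^3\rceil\le|L|\eps^3+1\le 2\eps^3|L|$ using $|L|\ge 1/\eps^3$. Next I would lower-bound $\opt$ in terms of $|L|$: every large item has size at least $\eps$, and a bin holding items each of size $\ge\eps$ can contain at most $1/\eps$ of them (this is where the size threshold, not the cardinality bound, is used — and since $k>1/\eps^2>1/\eps$ the cardinality constraint is not the binding one here). Therefore any feasible solution, in particular the optimal one, must use at least $|L|/(1/\eps)=\eps|L|$ bins just to hold the large items, so $\opt\ge\eps|L|$, i.e. $|L|\le\opt/\eps$. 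Combining, $|L_1|\le 2\eps^3|L|\le 2\eps^3\cdot\opt/\eps=2\eps^2\opt\le\eps\opt$, the last step using $\eps<1/2$. This yields the claimed inequality and hence the lemma.

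There is no real obstacle here; the only point requiring a little care is making sure the two cases of the definition of the large-item partition are handled — when $|L|<1/\eps^3$ the set $L_1$ is taken to be empty by fiat, so the bound is trivial, and when $|L|\ge 1/\eps^3$ the arithmetic above goes through cleanly. One should also double-check that $SOL_{large}$ genuinely uses no more than $\sum_{\tilde{C}}\hat{x}_{\tilde{C}}$ bins for the configuration part: each bin assigned a generalized configuration $\tilde{C}=(C,w)$ is filled only with the large items dictated by $C$ (small items are handled in a later stage), and since the rounded-up sizes dominate the true sizes, replacing each rounded large item by its original still leaves total size at most $1$ and the cardinality at most $k$, so these bins remain feasible. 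Thus the cost is as claimed, $\sum_{\tilde{C}\in\tilde{\C}}\hat{x}_{\tilde{C}}+|L_1|\le\sum_{\tilde{C}\in\tilde{\C}}\hat{x}_{\tilde{C}}+\eps\opt$.
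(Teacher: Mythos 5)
Your proof is correct and follows essentially the same route as the paper: reduce the lemma to showing $|L_1|\le\eps\,\opt$, use $|L_1|\le 2\eps^3|L|$ from the linear grouping together with $\opt\ge\eps|L|$ (since each large item has size at least $\eps$), and conclude $|L_1|\le 2\eps^2\opt\le\eps\opt$ for $\eps<1/2$. The extra remarks on the $|L|<1/\eps^3$ case and on feasibility of the configuration bins are fine but do not change the argument.
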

\begin{proof}
It suffices to show that $|L_1| \leq \eps\opt$. To see this last
claim note that $|L_1|\leq 2|L|\eps^3$ and each item in $L$ has
size at least $\eps$ and therefore $\opt \geq |L|\eps$, and
therefore $|L_1| \leq 2\eps^2 \opt$ and the claim follows since
$\eps <{1/2}$.
\end{proof}

\begin{corollary}
The number of bins used by $SOL_{large}$  is at most
$(1+2\eps)\opt + |H|+2|{\cal W'}|$.
\end{corollary}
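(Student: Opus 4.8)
The plan is simply to chain together the bounds already established in the excerpt, so the ``proof'' is essentially a bookkeeping calculation with no real obstacle. First I would recall the three ingredients at hand: (i) from the column generation / ellipsoid analysis and the subsequent passage to a basic solution of the restricted linear program, the value $\sum_{\tilde{C}\in\tilde{\cal C}}\x^*_{\tilde{C}}$ is at most $(1+\eps)\opt(I')$ (the basic solution $(\x^*,\Y^*)$ is no worse than the $(1+\eps)$-approximate solution produced earlier); (ii) Lemma~\ref{l1}, which gives $\sum_{\tilde{C}}\hat{x}_{\tilde{C}}\le\sum_{\tilde{C}}\x^*_{\tilde{C}}+|H|+2|{\cal W'}|$; and (iii) the immediately preceding lemma, stating that the cost of $SOL_{large}$ is at most $\sum_{\tilde{C}\in\tilde{\cal C}}\hat{x}_{\tilde{C}}+\eps\opt$.

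Next I would combine these in order. Starting from (iii) and substituting (ii), the cost of $SOL_{large}$ is at most $\sum_{\tilde{C}}\x^*_{\tilde{C}}+|H|+2|{\cal W'}|+\eps\opt$. Applying (i) turns this into $(1+\eps)\opt(I')+|H|+2|{\cal W'}|+\eps\opt$. Finally I would invoke the inequality $\opt(I')\le\opt$ established when the rounded-up instance $I'$ was introduced, yielding the bound $(1+\eps)\opt+\eps\opt+|H|+2|{\cal W'}|=(1+2\eps)\opt+|H|+2|{\cal W'}|$, which is exactly the claim.

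The only point that deserves a sentence of care is the passage from the approximate LP solution to the basic solution $(\x^*,\Y^*)$: one must note that deleting the constraints of (\ref{cons3}) and (\ref{cons4}) corresponding to windows outside ${\cal W'}$ and then taking a basic solution does not increase the objective, which is exactly what was argued in the paragraphs preceding Lemma~\ref{l1}; hence $\sum_{\tilde{C}}\x^*_{\tilde{C}}\le(1+\eps)\opt(I')$ indeed holds. Everything else is a direct substitution, so there is no genuine difficulty here — the corollary is just the accounting summary of the construction of $SOL_{large}$.
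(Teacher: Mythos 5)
Your proof is correct and follows exactly the route the paper intends: the corollary is just the chaining of $\sum_{\tilde{C}}\x^*_{\tilde{C}}\le(1+\eps)\opt(I')\le(1+\eps)\opt$ (using that the basic solution is no worse than the approximate LP solution), Lemma \ref{l1}, and the preceding lemma on $SOL_{large}$. Nothing is missing; the paper simply states the corollary without writing out this bookkeeping.
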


We next consider the packing of the small items that are supposed
to be packed (according to $\hat{Y}$) in bins with a window of
type $W=(w_s,w_n)$. Assume that there are $X(W)$ such bins (i.e.,
$X(W)=\sum\limits_{\tilde{C}:\tilde{C}=(C,W)}
\hat{x}_{\tilde{C}}$). Denote by $S(W)$ the set of small items
that we decided to pack in bins with window $W$ (for some of
these items we will change this decision in the sequel).  Then,
by the feasibility of the linear program we conclude that
$|S(W)|\leq w_n X(W)$ and $\sum\limits_{i\in S(W)} s'_i \leq w_s
X(W)$.  We next show how to allocate almost all items of $S(W)$
to the $X(W)$ bins with window $W$ such that each such bin will
contain  at most $k$ items (that is, at most $w_n$ small items)
and the total size of items in each such bin will be at most
$1+\frac{\eps w_s}{1+\eps}$.  To do so, we sort the items in
$S(W)$ according to non-increasing size (assume the sorted list
of item indices is $b_1 \geq b_2 \geq  \ldots \geq b_{|S(W)|}$).
Then, allocate the items to the bins in a round-robin manner, so
that bin $j$ receives items of indices $b_{j+p\cdot X(W)}$ for
all integers $p\geq 0$ such that $j+p\cdot X(W) \leq |S(W)|$. We
call the allocation of items for a given value of $p$ a {\it
round of allocations}. If $w_s=\frac{s'_{min}}{1+\eps}$ then all
items assigned to this type of window are of size 0, and the
resulting allocation is valid in the sense that every bin
contains items of total size at most 1, and at most $k$ items per
bin, and there is no need to adapt the packing of small items. We
therefore assume $w_s \geq s'_{min}$. We claim that the last bin
of index $X(W)$ received at most an $\frac{1}{X(W)}$ fraction of
the total size of the items, whose sum is equal to
$\sum\limits_{i=1}^{|S(W)|} b_i$. To prove this, we artificially
add at most $X(W)-1$ items of size zero to the end of the list
(these items are added just for the sake of the proof), and
allocate them to the bins that previously did not receive an item
in the last round of allocations, that is, bins $r,\ldots,X(W)$
such that bin $r-1<X(W)$ received the last item. If bin $X(W)$
received the last item then no items are added. Now the total
size of small items remained the same, but every bin got exactly
one item in each round. Since the last bin received the smallest
item in each round, the claim follows. On the other hand, we can
apply the following process, at every time $i < X(W)$, remove the
first (largest) small item from bin $i$. As a result, the
round-robin assignment now starts from bin $i+1$ and bin $i$
becomes the bin that receives items last in every round, and thus
by the previous proof, the total size of items assigned to it is
at most $\frac{\sum\limits_{i=1}^{|S(W)|} b_i}{X(W)}$ (since the
total size of items does not increase in each step).

We create an intermediate solution $SOL_{inter}$ by removing the
largest small item from every bin and packing these removed items
in separate bins in groups of $1\over \eps$ removed items per bin
(note that such bin is feasible as the total size of $1\over \eps$
small items is at most 1 and $k > {1\over \eps}$ and hence the
cardinality constraint is satisfied as well).  The total cost of
this intermediate solution is therefore at most $(1+\eps)\cdot
\left( (1+2\eps)\opt + |H|+2|{\cal W'}| \right) +1$ (the last bin
can contain less than $1\over \eps$ such removed items).  We note
that since we divide the items in $S(W)$ equally (up to a
difference of one item) to the $X(W)$ bins, we conclude that after
the removal of one item from each bin (or even before that), every
such bin has at most $k$ items (both large and small), and
therefore all the bins satisfy the cardinality constraint.
Moreover, the total size of small items assigned to such bin
(after the removal of one item per bin) is at most $w_s$ by the
above argument regarding the total size of small items in a bin
where the largest small item was removed.

The intermediate solution is infeasible because our definition of
$w_s$ is larger than the available space for small items in such
bin. We create the final solution $SOL_{final}$ as follows.
Consider a bin such that the intermediate solution packs to it
large items according to configuration $C$, and small items with
total size at most $w_s(C)$.  For every bin, we do not change the
packing of large items. As for the small items, we remove them
from the bin and start packing the small items into this bin
greedily in non-decreasing order of the item sizes, as long as the
total size of items packed to the bin does not exceed 1. The first
item that does not fit into the bin we pack in separate bins (each
such separate bin will contain $1\over \eps$ such first items for
different bins of $SOL_{inter}$). Similarly to the above argument
these are feasible bins and they add an additive factor of $\eps$
times the cost of $SOL_{inter}$ to the total cost of the packing
(plus 1).

By the definition of windows, the actual space in a bin with
window $(w_s,w_n)$, that is free for the use of small items, is
at least of size $\frac{w_s}{1+\eps}$. After the removal of the
first item that does not fit into the space for small items, the
remaining small items allocated to this bin have a total size of
at most $w_s-\frac{w_s}{1+\eps}=\eps \frac{w_s}{1+\eps}$. Since by
definition, $\frac{w_s}{1+\eps}<1$, the small items that were
assigned to a bin but cannot be packed (not including the first
item that was packed in the previous step) are of total size at
most $\eps$. Similar considerations can be applied to the
cardinality of these items. Since the unpacked items are the
largest ones, the remaining unpacked items in a bin of
$SOL_{inter}$ have cardinality of at most $\eps w_n \leq \eps k$.
Therefore, we can pack the unpacked items of every $1\over \eps$
bins of $SOL_{inter}$ using one additional bin. In this way we get
our final solution $SOL_{final}$.  We note that the cost of
$SOL_{final}$ is at most $(1+2\eps)$ times the cost of
$SOL_{inter}$ plus two.  Therefore the cost of $SOL_{final}$ is at
most $ (1+2\eps) \left( (1+\eps)\cdot \left( (1+2\eps)\opt +
|H|+2|{\cal W'}| \right) +1 \right) +2 \leq (1+10\eps)\opt +
5(|H|+|{\cal W'}|+1) \leq (1+10\eps)\opt + 5({1\over \eps^3}+
({1\over \eps^3}+1)^{1/\eps}) +2$ where the last inequality holds
by $\eps < {1\over 2}$, $|H|\leq {1\over \eps^3}$ and $|{\cal
W}'|\leq |{\cal C}| \leq ({1\over \eps^3}+1)^{1/\eps}$. Therefore,
we have established the following theorem.

\begin{theorem}
If $k\geq {1\over \eps^2}$, the above scheme is an AFPTAS for \pr.
\end{theorem}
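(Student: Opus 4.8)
The plan is to assemble the ingredients developed in this section into the three requirements of an AFPTAS: the scheme outputs a feasible packing, its cost is $(1+\eps)\opt$ plus an additive term depending only on $\eps$ (after a harmless rescaling of $\eps$), and it runs in time polynomial in $n$ and $\frac{1}{\eps}$. Recall that if $k\geq n$ the scheme simply invokes the AFPTAS of Karmarkar and Karp \cite{KK82}, so we may assume $k<n$.

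First I would check that the final solution $SOL_{final}$ is feasible. The large items are packed according to the integral vector $\hat x$ of rounded-up configurations, plus one bin per item of $L_1$; every configuration has rounded-up size at most $1$ and at most $k$ items, and replacing a rounded-up large item by the corresponding original item only decreases its size, so all these bins obey both constraints. For the small items, the round-robin allocation to the $X(W)$ bins of window type $W$, followed by removing the largest small item from each bin, leaves in each bin of $SOL_{inter}$ at most $w_n\le k-\sum_{v\in H}n(v,C)$ small items and total small-item size at most $w_s$; re-inserting the small items greedily into the genuine free space of size $\frac{w_s}{1+\eps}<1$ and evicting the first item that does not fit, as in the construction of $SOL_{final}$, keeps each bin's total size at most $1$ and its cardinality at most $k$. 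Every auxiliary bin created along the way — for small items packed fractionally by the LP, for the removed largest items, and for the evicted overflow items — was already seen to be feasible, using $k>\frac{1}{\eps}$. Hence $SOL_{final}$ is a valid solution of \pr.

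Next I would chain the cost estimates already established above: $\opt(I')\le\opt$; the ellipsoid-based approximate solution of the dual gives a primal solution of cost at most $(1+\eps)\opt(I')$; restricting constraints (\ref{cons3}) and (\ref{cons4}) and the variables to windows in $\W'$ and passing to a basic solution does not increase the cost; by Lemma~\ref{l1} the rounding of $\x^*$ to $\hat x$, together with the dedicated bins for fractionally packed small items, costs at most $|H|+2|\W'|$ extra bins; the bins of $L_1$ add at most $\eps\opt$; forming $SOL_{inter}$ multiplies by $(1+\eps)$ and adds $1$; and forming $SOL_{final}$ multiplies by a further $(1+2\eps)$ and adds $2$. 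Expanding exactly as above and using $\eps<\frac{1}{2}$ gives $\apx\le(1+10\eps)\opt+5\left(\frac{1}{\eps^3}+\left(\frac{1}{\eps^3}+1\right)^{1/\eps}\right)+2$, whose additive term depends only on $\eps$. Running the whole construction with $\eps$ replaced by $\frac{\eps}{10}$ turns this into $\apx\le(1+\eps)\opt+c(\eps)$ with $c(\eps)$ independent of the input, which is the required asymptotic $(1+\eps)$ guarantee.

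For the running time, the primal LP has $|H|+2|\W|+|S|$ constraints, which is polynomial in $n$ and $\frac{1}{\eps}$ because $\log_{1+\eps}\frac{1}{s'_{min}}$ is polynomial in the input size; its dual has polynomially many variables, the constraints (\ref{con2}) are polynomial in number, and the constraints (\ref{con1}) admit a polynomial-time approximate separation oracle obtained by calling, once per window type $W\in\W$, an FPTAS for the knapsack problem with a cardinality constraint. Hence the ellipsoid method yields a $(1+\eps)$-approximate dual solution, and therefore a $(1+\eps)$-approximate primal solution supported on polynomially many generalized configurations, in time polynomial in $n$ and $\frac{1}{\eps}$; extracting a basic solution of the reduced LP and performing the rounding, linear grouping and round-robin steps are all polynomial. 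I expect the only point requiring genuine care to be the correctness of the approximate separation oracle — in particular that the cases $w_s<s'_{min}$ and $w_s\ge s'_{min}$ are handled so that every valid generalized configuration whose dual constraint (\ref{con1}) is violated is detected — but this is exactly the content of the case analysis preceding the linear programs and may simply be invoked here.
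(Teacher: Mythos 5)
Your proposal is correct and follows essentially the same route as the paper: the theorem there is simply the culmination of the section's chain of arguments (feasibility of $SOL_{final}$, the cost chain through Lemma~\ref{l1}, $SOL_{large}$, $SOL_{inter}$ and $SOL_{final}$ yielding $(1+10\eps)\opt+5(\frac{1}{\eps^3}+(\frac{1}{\eps^3}+1)^{1/\eps})+2$, and the polynomial running time via the approximate separation oracle and column generation), which is exactly what you assemble. Your explicit rescaling $\eps\to\eps/10$ is left implicit in the paper but is the standard and correct way to state the $(1+\eps)$ guarantee.
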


Since we covered both cases, we obtain the following.
\begin{theorem}
The above scheme is an AFPTAS for \pr.
\end{theorem}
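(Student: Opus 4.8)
The plan is to assemble the pieces already established, since the scheme for \pr\ is defined by branching on the value of $k$ into regimes each of which has been handled separately. Under the standing conventions on $\eps$ (namely $\eps < \frac12$ with $\frac1\eps$ an integer; if the given accuracy parameter does not satisfy this, first replace it by a smaller value that does, which costs only a constant factor), there are three regimes to cover: $k \geq n$, $\frac1{\eps^2} < k < n$, and $k \leq \frac1{\eps^2}$, and these are clearly exhaustive.

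First I would dispose of $k \geq n$: since there are only $n$ items in total, no bin can ever contain more than $k$ items, so the cardinality constraint is vacuous and an optimal solution of \pr\ is simply an optimal classic bin packing; hence the AFPTAS of Karmarkar and Karp \cite{KK82} applies verbatim and runs in time polynomial in $n$ and $\frac1\eps$. For $\frac1{\eps^2} < k < n$, the preceding theorem already asserts that the scheme of Section~\ref{pr} (the second case) is an AFPTAS: its output has cost at most $(1+10\eps)\opt$ plus an additive term that depends on $\eps$ alone, and every step --- linear grouping of the large items, the approximate solution of the configuration LP by the column-generation method using the FPTAS for KCC as (approximate) separation oracle, passing to a basic solution, rounding it up, and the round-robin distribution of small items into windows followed by the two repacking steps --- runs in time polynomial in $n$, $\frac1\eps$ and $\log\frac1{s'_{min}}$, hence polynomial in the input size and in $\frac1\eps$. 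For the remaining regime $k \leq \frac1{\eps^2}$, I would invoke the scheme given in the Appendix (the first case), which is stated there to be an AFPTAS for \pr\ under this hypothesis --- the argument is genuinely simpler there because each bin contains at most $\frac1{\eps^2}$ items, so the small items can be handled without the machinery of windows.

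It remains only to combine the three: given $\eps$, the algorithm tests whether $k \geq n$, then whether $k > \frac1{\eps^2}$, and runs the corresponding subroutine. In every case the cost of the output is at most $(1+c\eps)\opt + f(\eps)$ for an absolute constant $c$ and a function $f$ of $\eps$ only, and the running time is polynomial in $n$ and $\frac1\eps$; re-scaling the input accuracy by $\frac1c$ then gives, for every $\eps > 0$, a polynomial-time algorithm whose asymptotic approximation ratio is $1+\eps$ and whose running time is also polynomial in $\frac1\eps$ --- precisely an AFPTAS for \pr. I do not expect any obstacle in this final step: all the substance lies in the individual regimes, and here one only has to check that the case split is exhaustive and that the three additive constants and the three running-time bounds are each of the form required by the definition of an AFPTAS, which they are.
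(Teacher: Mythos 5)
Your proposal is correct and matches the paper's own argument, which simply observes that the two case-theorems (for $k\leq \frac{1}{\eps^2}$ and $k>\frac{1}{\eps^2}$, the latter handling $k\geq n$ internally via Karmarkar--Karp) cover all values of $k$, so the combined scheme is an AFPTAS. Your extra remarks on rescaling $\eps$ and treating $k\geq n$ as a separate branch are harmless elaborations of the same approach.
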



\small
\bibliographystyle{plain}


\newpage
\appendix

\section{First case of Section \ref{pr}: $k\leq {1\over \eps^2}$} In this case we
apply linear grouping for {\it all} items, and do not classify
items into types. That is, we partition the items into $1\over
\eps^3$ classes $L_1,L_2,\ldots ,L_{1/\eps^3}$ such that $\lceil
n\eps^3 \rceil =|L_1| \geq |L_2| \geq \cdots \geq
|L_{1/\eps^3}|=\lfloor n\eps^3 \rfloor$ (note that this condition
uniquely identifies the cardinality of each class), and such that
if there are two items $i,j$ with sizes $s_i>s_j$ and $i\in L_q$
and $j\in L_p$ then $q \leq p$ (so $L_1$ receives the subset of
largest items, and for $2\leq p \leq {1\over \eps^3}$, $L_p$
  receives the largest items
from $I\setminus \left( L_1\cup \cdots \cup L_{p-1}\right)$).  The
two conditions uniquely define the allocation of items into
classes up to the allocation of equal sized items.

Then, we round up the sizes of the items in $L_2,\ldots
,L_{1/\eps^3}$ as follows: For all values of $p$, $p=2,3,\ldots
,1/\eps^3$, and for each item $i\in L_p$, we define
$s'_i=\max_{j\in L_p} s_j$ to be the {\it rounded-up size of item
$i$}.  The rounded-up instance $I'$ consists of the set of items
$I\setminus L_1$, where for every $i$, the size of item $i$ is
$s'_i$ (the parameter $k$ remained unchanged). We next argue that
$\opt(I') \leq \opt$. Given an optimal solution to $I$, \opt, we
transform it into a solution to $I'$. We define a bijection from
$I'$ to $I$, so that every item of $I'$ is mapped to an item of
$I$ that is no smaller, and can take its place in the packing.
Since $I'$ does not contain $L_1$, and since $|L_i|\leq
|L_{i-1}|$ (in both $I$ and $I'$, since the size of sets is not
influenced by the rounding), we map every item of $L_i$ (for all
$i\geq 2$) in $I'$ to some item of $L_{i-1}$ in $I$. By our
rounding, every item of $L_i$ in $I'$ is no larger than any item
of $L_{i-1}$ in $I$.

Given the rounded-up instance $I'$, we let a configuration of a
bin $C$ be a set of at most $k$ items of $I'$ whose total
(rounded-up) size is at most 1.  We denote the set of all
configurations by $\cal C$ (this set is not computed explicitly,
and typically has an exponential size). We denote the set of item
sizes in $I'$ by $H$. For each $v\in H$, we let $n(v,C)$ be the
number of items with size $v$ in $C$, and we let $n(v)$ be the
number of items in $I'$, with size $v$ (where $n(v)=\lfloor
n\eps^3 \rfloor$ or $n(v)=\lceil n\eps^3 \rceil$, unless several
classes are rounded to the same size, thus $|H|\leq
\frac{1}{\eps^3}$). We solve (approximately) the following linear
program, where for each configuration $C$, there is a variable
$x_C$ indicating the number of bins packed using configuration
$C$.

\begin{eqnarray*}
\min & \sum\limits_{C\in {\cal C}} x_C& \\
s.t.& \sum\limits_{C\in {\cal C}} n(v,C) x_C \geq n(v) & \forall v\in H\\
& x_C \geq 0& \forall C\in {\cal C}.
\end{eqnarray*}
We let $x^*$ be an approximate (within a factor of $1+\eps$)
solution to this linear program, and further define $y_C=\lceil
x^*_C \rceil$, for every configuration $C \in \C$. It can be seen
that the vector $y$ is a feasible solution to the linear program
(since $y_C \geq x_C$ for all $C$, it satisfies all the
constraints). Our scheme returns a solution that packs $y_C$ bins
with configuration $C$, in this solution, there are at least
$n(v)$ slots for every $v \in H$. Note that some of these slots
may remain empty, which happens in the case that the number of
slots is strictly larger than $n(v)$. To get a solution for
$I\setminus L_1$, each item of the rounded-up instance is replaced
by the corresponding item of $I$.  We clearly use at most
$\sum\limits_{C\in {\cal C}} y_C$ bins in this way.

To solve the above linear program approximately, we invoke the
column generation technique of Karmarkar and Karp \cite{KK82}. We
next elaborate on this technique.  The above linear program has
an exponential number of variables and a polynomial number of
constraints (neglecting the non-negativity constraints). Instead
of solving the linear program, we solve its dual program (that has
a polynomial number of variables and an exponential number of
constraints). The variables $z_v$ correspond to the item sizes in
$H$, their intuitive meaning can be seen as weights of these
items.
\begin{eqnarray*}
\max & \sum\limits_{v \in H} n(v) z_v&\\
s.t. & \sum\limits_{v\in H} n(v,C) z_v \leq 1& \forall C\in {\cal C}\\
& z_v \geq 0& \forall v\in H.
\end{eqnarray*}
To be able to apply the ellipsoid algorithm, in order to solve the
above dual problem within a factor of $1+\eps$, it suffices to
show that there exists a polynomial time algorithm (polynomial in
$n$ and $1\over \eps$) such that for a given solution $z^*$ (which
is a vector of length $|H|\leq \frac 1{\eps^3}$), decides whether
$z^*$ is a feasible dual solution (approximately). That is, it
either provides a configuration $C \in \C$ such that
$\sum\limits_{v\in H} n(v,C) z^*_v > 1$, or outputs that an
approximate infeasibility evidence does not exist, that is, for
all configurations $C \in \C$, $\sum\limits_{v\in H} n(v,C) z^*_v
\leq 1+\eps$ holds. In such a case, $z^*\over 1+\eps$ is a
feasible dual solution that can be used. Such a configuration $C$
can be found using an FPTAS for the following {\sc knapsack
problem with a maximum cardinality constraint} (KCC): Given a set
of item types $H$, where each item type $v \in H$ has a given
multiplicity $n(v)$, a volume $z^*_v$ and a size $v$, the goal is
to pack a multiset of at most $k$ items (taking the multiplicity,
in which items are taken, into account, and letting the solution
contain at most $n(v)$ items of type $v$) and a total size of at
most 1, so that the total volume is maximized. If the FPTAS to KCC
finds a solution with a total volume greater than 1, then this
solution is a configuration whose constraint in the dual linear
program is violated, and we can continue with the application of
the ellipsoid algorithm. Otherwise, the FPTAS to KCC finds a
solution with a total volume of at most 1, since the FPTAS is an
$1+\eps$ approximation, it means that no solution with a total
volume larger than $1+\eps$ exists, and therefore, all the
constraints of the dual linear program are satisfied by the
solution $z^*\over 1+\eps$. To provide an FPTAS for KCC, note
that one can replace an item with size $v$ by $n(v)$ copies of
this item and then one can apply the FPTAS of Caprara et al.
\cite{CKPP} for the knapsack problem with cardinality constraints.
The FPTAS of \cite{CKPP} clearly has polynomial time in the size
of its input, and $\frac 1{\eps}$. Since the number of items that
we give to this algorithm as input is at most $n$, we can use this
FPTAS and still let our scheme have polynomial running time.

Since the approximated separation oracle that we described above
runs in polynomial time (polynomial in $n$ and $1\over \eps$) we
conclude that the approximated solution of the (primal) linear
program $x^*$ is obtained in polynomial time (again polynomial in
$n$ and $1\over \eps$). Since $x^*$ is a solution of a linear
program with an exponential number of variables, $x^*$ is given in
a compact representation, which is a list of non-zero components
of the solution, together with their values.
 The set of items in $I \setminus L_1$ is
packed according to the integral solution $y$ as described above.
It remains to pack $L_1$.  To do so, we pack each item of $L_1$
in a separate (dedicated) bin.  Note that there are $|L_1|=\lceil
n\eps^3 \rceil \leq n\eps^3+1$ such bins, and since $\opt \geq
{n\over k} \geq n\eps^2$ we conclude that the number of
additional bins (used to pack $L_1$) is at most $\eps\opt+1$, so
$\apx \leq \sum\limits_{C\in {\cal C}} y_C +|L_1|\leq
\sum\limits_{C\in {\cal C}} y_C +\eps\opt+1$. Therefore, it
suffices to bound the cost, implied by $y$, in terms of $\opt$.

Instead of using an optimal solution to the linear program (whose
value is a lower bound on $\opt(I')$, since $\opt(I')$ is a valid
solution to the linear program), we use a $1+\eps$-approximated
solution, and this degrades the value of the returned solution
within a factor of $1+\eps$ (i.e., $\sum\limits_{C\in\C}x_C^*\leq
(1+\eps)\cdot \opt(I')$).

We next bound $\sum\limits_{C\in {\cal C}} (y_C-x^*_C)$.  Note
that in the primal linear program there are at most $1\over
\eps^3$ constraints (not including non-negativity constraints),
and hence in a basic solution (a property that we can always
assume that $x^*$ satisfies) there are at most $1\over \eps^3$
positive components, and hence there are at most $1\over \eps^3$
fractional components. Therefore, $\sum\limits_{C\in {\cal C}}
(y_C-x^*_C) \leq {1\over \eps^3}$.

Therefore, $\apx \leq \sum\limits_{C\in {\cal C}} y_C +\eps\opt+1
= \sum\limits_{C\in {\cal C}} x^*_C +\sum\limits_{C\in {\cal C}}
(y_C-x^*_C) + \eps\opt +1 \leq (1+\eps)\cdot \opt(I') + {1\over
\eps^3} + \eps\opt +1 \leq (1+2\eps) \opt + {1\over \eps^3}+1$.
Hence, we conclude the following theorem.
\begin{theorem}
If $k\leq {1\over \eps^2}$, the above scheme is an AFPTAS for \pr.
\end{theorem}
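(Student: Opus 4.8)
The plan is to assemble the pieces established in the preceding discussion into the two requirements of an AFPTAS: an asymptotic guarantee of the form $\apx \le (1+\delta)\opt + c(\delta)$ with $c(\delta)$ independent of the input, and a running time polynomial in both $n$ and $1/\delta$. Essentially all the analytic work has already been carried out; what remains is to check that the individual bounds combine correctly, that every computational step is efficient, and then to rescale $\eps$.

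First I would recall the approximation bound proved above. The rounded-up instance $I'$ satisfies $\opt(I') \le \opt$ via the size-preserving bijection from $I'$ to $I$ (each item of $L_p$ in $I'$ maps to a no-smaller item of $L_{p-1}$ in $I$), which is well defined because $|L_p| \le |L_{p-1}|$. Using a $(1+\eps)$-approximate LP solution $x^*$ costs a factor $1+\eps$, i.e.\ $\sum_{C\in\C} x^*_C \le (1+\eps)\opt(I')$; rounding up to $y_C=\lceil x^*_C\rceil$ at a basic solution adds at most $1/\eps^3$, the number of (non-negativity-free) constraints and hence a bound on the number of fractional components; and packing $L_1$ in dedicated bins adds $|L_1| \le n\eps^3+1 \le \eps\opt+1$, where the decisive inequality $\opt \ge n/k \ge n\eps^2$ uses the case hypothesis $k \le 1/\eps^2$. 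Chaining these yields $\apx \le (1+2\eps)\opt + 1/\eps^3 + 1$, an asymptotic ratio of at most $1+2\eps$ with additive term $c(\eps)=1/\eps^3+1$ depending only on $\eps$. I would also note, for feasibility of the output, that each configuration is a bin of total size at most $1$ with at most $k$ items, that $y_C \ge x^*_C$ preserves all the ($\ge$-type) covering constraints so $y$ is LP-feasible, and that replacing each rounded item by the corresponding no-larger original item keeps every bin feasible.

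Second I would verify polynomial running time. Linear grouping and the final rounding and packing steps are clearly polynomial. For the LP I would invoke the Karmarkar--Karp column generation: rather than the primal (exponentially many variables, at most $1/\eps^3$ constraints) we apply the ellipsoid method to the dual (polynomially many variables $z_v$, exponentially many constraints), which needs only an approximate separation oracle. That oracle is exactly the FPTAS for KCC run on the vector $z^*$: expand each type $v\in H$ into its $n(v)$ copies (at most $n$ items in total) and apply the Caprara--Kellerer--Pferschy--Pisinger knapsack-with-cardinality FPTAS; if it returns volume $>1$ we have a violated dual constraint, and otherwise $z^*/(1+\eps)$ is dual-feasible. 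This oracle runs in time polynomial in $n$ and $1/\eps$, so the ellipsoid method produces a $(1+\eps)$-approximate primal solution $x^*$ in polynomial time, given compactly as a basic solution with at most $1/\eps^3$ nonzero components.

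Finally, since for every admissible $\eps$ (that is, $\eps<1/2$ with $1/\eps$ an integer) the scheme runs in time polynomial in $n$ and $1/\eps$ and guarantees $\apx \le (1+2\eps)\opt + (1/\eps^3+1)$, given a target accuracy $\delta>0$ I would run it with $\eps$ chosen as the largest admissible value with $\eps \le \delta/2$; this gives asymptotic ratio at most $1+\delta$ and running time polynomial in $n$ and $1/\delta$, i.e.\ an AFPTAS. I do not expect a genuine obstacle here: the only point requiring care is the dependence on the case hypothesis $k\le 1/\eps^2$, which enters solely through $|L_1|\le \eps\opt+1$ (equivalently $\opt \ge n\eps^2$) and is precisely what licenses applying linear grouping to \emph{all} items in this case rather than only to the large ones.
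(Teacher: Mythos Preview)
Your proposal is correct and follows essentially the same argument as the paper: you assemble the same ingredients (the bijection giving $\opt(I')\le\opt$, the $(1+\eps)$-approximate LP via Karmarkar--Karp with the KCC FPTAS as separation oracle, the $1/\eps^3$ loss from rounding a basic solution, and $|L_1|\le\eps\opt+1$ via $\opt\ge n/k\ge n\eps^2$) to reach the identical bound $\apx\le(1+2\eps)\opt+1/\eps^3+1$. The only addition is your explicit rescaling from $\eps$ to a target $\delta$, which the paper leaves implicit.
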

\section{An AFPTAS for \bpr}
\label{bpr} In this section we use the similarity between the
APTAS of Caprara et al. \cite{CKP03} for \pr\ and the APTAS of
Epstein \cite{Eps06} for \bpr\ to develop our methods further. We
obtain an AFPTAS for \bpr\ using adaptations to the methods of the
previous section.

Without loss of generality we assume that $r_i \leq 1$ for all
$i$.  This is so as if there is an item with higher rejection
penalty, then it is better to pack this item in a separate
additional bin instead of rejecting it, and this situation is
true  for an item with a unit rejection penalty as well.
Therefore, by changing the rejection penalty of such an item to
1, we do not change the optimal solution or a (reasonable)
approximate solution.

Let $0< \eps \leq \frac 13$ be such that $1\over \eps$ is an
integer. An item $j$ is {\it large} if both $s_j\geq \eps$ and
$r_j \geq \eps$. All other items are {\it small}.  We denote by
$L$ the set of large items, and by $S$ the set of small items.

We perform rounding of the rejection penalties and the sizes of
the large items (only).  For $i=0,1,\ldots,\Delta={1\over
\eps^2}-{1\over \eps}$, and every large item $j\in L$, such that
$r_j \in [\eps+i\eps^2, \eps+(i+1)\eps^2)$, we round down the
rejection penalty $r_j$ to $\eps+i\eps^2$.  For a large item $j$,
denote the rounded rejection penalty of $j$ by $r'_j$, and for a
small item $j$ let $r'_j=r_j$. Define $I'$ to be the adapted
input, $\alg (I')$ be the cost of an arbitrary algorithm $\alg$ on
the input $I'$, and let $\alg '(I')$ be the cost of the same
algorithm on the original items. Then, Epstein \cite{Eps06}
showed the following:

\begin{lemma}[Lemma 1 in \cite{Eps06}] $\alg'(I')\leq
(1+\eps)\alg(I')$ and $\opt(I') \leq \opt(I)$.
\end{lemma}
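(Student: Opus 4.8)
The plan is to establish each of the two inequalities by the same elementary device: fix one solution, view it simultaneously as a solution of $I$ and of $I'$, and compare its two costs. Everything hinges on a single observation about the penalty rounding. For a large item $j$ with $r_j\in[\eps+i\eps^2,\eps+(i+1)\eps^2)$ we set $r'_j=\eps+i\eps^2$, so $r'_j\le r_j<r'_j+\eps^2$; since $r'_j\ge\eps$ we have $\eps^2\le\eps\cdot r'_j$ and hence $r'_j\le r_j\le(1+\eps)r'_j$. For a small item $j$ we have $r'_j=r_j$ exactly. In words: the rounding never increases a penalty and it shrinks each penalty by a factor of at most $1+\eps$, while no item size is touched.

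For the inequality $\alg'(I')\le(1+\eps)\alg(I')$ I would fix the solution that $\alg$ produces on $I'$, described by an accepted set $A$ partitioned into $z$ bins together with the rejected set $I'\setminus A$. This same assignment is feasible for $I$ as well, because the item sizes coincide, and it still uses exactly $z$ bins. Its cost on $I'$ is $z+\sum_{j\in I'\setminus A}r'_j$, while its cost on the original data, which by definition is $\alg'(I')$, is $z+\sum_{j\in I'\setminus A}r_j$. Splitting $I'\setminus A$ into its small and large parts and applying $r_j=r'_j$ to the small items and $r_j\le(1+\eps)r'_j$ to the large ones gives $\sum_{j\in I'\setminus A}r_j\le(1+\eps)\sum_{j\in I'\setminus A}r'_j$; together with the trivial bound $z\le(1+\eps)z$ this yields $\alg'(I')\le(1+\eps)\bigl(z+\sum_{j\in I'\setminus A}r'_j\bigr)=(1+\eps)\alg(I')$.

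For $\opt(I')\le\opt(I)$ I would take an optimal solution of $I$, of cost $z^*+\sum_{j\text{ rejected}}r_j$, and use it unchanged as a candidate solution of $I'$: it is feasible since the sizes are unchanged, it uses the same $z^*$ bins, and its cost on $I'$ is $z^*+\sum_{j\text{ rejected}}r'_j\le z^*+\sum_{j\text{ rejected}}r_j=\opt(I)$ because $r'_j\le r_j$ for every $j$. Hence $\opt(I')\le\opt(I)$. The only delicate point --- the main (mild) obstacle --- is in the first inequality: the $(1+\eps)$ inflation of the total rejection term must be absorbed by the rejection term itself rather than by the bin term, and this is legitimate precisely because every rounded large penalty is at least $\eps$, so that the absolute rounding error $\eps^2$ is a relative error of at most $\eps$; it is exactly to preserve this property that small items, which may have tiny penalties, are exempted from the rounding. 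If the size rounding of large items by linear grouping is later folded into $I'$ as well, both transfers survive unchanged: replacing a rounded-up large item by its (no larger) original preserves feasibility and the bin count, and the removed top group is packed one item per bin and charged separately, by an argument analogous to the one in Section~\ref{pr}.
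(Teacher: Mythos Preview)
Your argument is correct. The paper itself does not prove this lemma; it merely quotes it as Lemma~1 of \cite{Eps06}. Your proof is exactly the standard one: the rounding satisfies $r'_j\le r_j\le(1+\eps)r'_j$ for large items (because $r'_j\ge\eps$ makes the additive gap $\eps^2$ into a multiplicative $\eps$) and $r'_j=r_j$ for small items, while sizes are untouched; transferring a fixed solution between $I$ and $I'$ then gives both inequalities immediately. The closing paragraph about linear grouping is extraneous here, since at this point in the paper $I'$ involves only the penalty rounding and not yet the size rounding, but it does no harm.
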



For $i=0,1,\ldots ,\Delta$, let $L^i=\{j_1,\ldots,j_{n_i}\}$ be
the set of  large items with rounded rejection penalty
$\eps+i\eps^2$, such that $s_{j_1}\geq s_{j_2}\geq \cdots \geq
s_{j_{n_i}}$.  For each set $L^i$ such that $|L^i| \geq
\frac{1}{\eps^3}$, we perform linear grouping  separately.  That
is, for values of $i$ such that $|L^i| \geq \frac{1}{\eps^3}$, we
let $m={1\over \eps^3}$ and we partition $L^i$ into $m$ classes
$L^i_1,\ldots ,L^i_m$ such that $\lceil n_i\eps^3\rceil
=|L^i_1|\geq |L^i_2|\geq \cdots \geq |L^i_m|=\lfloor n_i\eps^3
\rfloor$, and $L^i_p$ receives the largest items from $L^i
\setminus \left[ L^i_1\cup \cdots \cup L^i_{p-1}\right]$.  For
every $i=0,1,\ldots ,\Delta$ and $j=2,3,\ldots ,n_i$ we round up
the size of the elements of $L^i_j$ to the largest size of any
element of $L^i_j$. For item $j$ of a set $L^i_p$ ($p \geq 2$),
we denote by $s'_{j}$ the rounded-up size of the item, which is
defined to be equal to the maximum size of any item in $L^i_p$.
For items in $L^i_1$ (for all $i$) we do not round the sizes, and
we denote $s'_j=s_j$ for all $j\in L^i_1$. For values of $i$ such
that $|L^i| < {1\over \eps^3}$, each large item of $L^i$ has its
own set $L^i_j$ such that $L^i_1$ is an empty set, and for a large
item $j\in L^i$ we let $s'_j=s_j$. We note that in both cases
(i.e., for large and small cardinalities of $L^i$) we have
$|L^i_1| \leq 2\eps^2 |L^i|$. For $j\in S$ we also denote
$s'_j=s_j$. We denote by $L_1=\cup_{i=0}^{\Delta} L^i_1$ and
$L'=L\setminus L_1$. By the above, we have $|L_1|\leq
2\eps^3|L|$. We consider the instance $I''$ consisting of the
items in $L'\cup S$ with the (rounded-up) sizes function $s'$ and
the rounded rejection penalty function $r'$. The items in $L_1$
are packed each in a separate bin. We have $\opt(I'')\leq
\opt(I')$, similarly to the previous sections. We next describe
the packing of the items in $I''$.

Given the instance $I''$, we let a configuration of a bin $C$ be
a (possibly empty) set of items of $L'$ whose total (rounded-up)
size is at most 1. We denote  the set of all configurations by
$\cal C$. Let $H=\{(\sigma_1,\rho_1),\ldots (\sigma_t,\rho_t)\}$
be  the set of different types of large items where $\sigma_j$
denotes the (rounded-up) size of an item with type
$(\sigma_j,\rho_j)$, and $\rho_j$ is its (rounded) rejection
penalty.  We have $|H|\leq \frac{1}{\eps^3}\cdot \Delta\leq
\frac{1}{\eps^5}$. For each $v\in H$ we denote by $n(v,C)$ the
number of items with type $v$ in $C$, and we denote by $n(v)$ the
number of items in $I''$ with type $v$. For a large item $j$, we
denote by $type(j)$ the type of $j$.

We denote the minimum  size of an item by $s_{min}= \min_{i\in S}
s'_i$ (recall that in \bpr, it is assumed that sizes of items are
strictly larger than 0), and as in the previous section, we let
$s'_{min}=\max\{\frac{1}{(1+\eps)^t}|t\in \mathbb{Z}, \
\frac{1}{(1+\eps)^t}\leq s_{min}\}$. The value
$\log_{1+\eps}{\frac{1}{s'_{min}}}$ is polynomial in the size of
the input. We define the following set
$\W=\{\frac{1}{(1+\eps)^t}|0\leq t \leq
\log_{1+\eps}{\frac{1}{s'_{min}}}+1\}$. A {\it window} is defined
as a member of $\W$.  $\W$ is also called the set of all possible
windows. Then, $|\W| \leq \cdot (\log_{1+\eps} {1\over
s'_{min}}+2) $. Since windows are scalars, they can be compared
with respect to their size.

Note that each bin packed with large items  according to a
configuration $C$ typically leaves space for small items.  For a
configuration $C$ we denote the {\it main window of $C$} to be
$w(C)$, which can be seen as an approximation of the available
size for small items in a bin with configuration $C$. We define
it as follows. Assume that the total (rounded-up) size of the
items in $C$ is $s'(C)$.  Then, $w(C)={1\over (1+\eps)^t}$ where
$t$ is the maximum integer such that  $0 \leq t \leq
\log_{1+\eps}{\frac{1}{s'_{min}}}+1$ and that $s'(C) + {1\over
(1+\eps)^t} \geq 1$. The main window of a configuration is a
window (i.e., belongs to $\W$), but $\W$ may include windows that
are not the main window of any configuration. We note that $|{\cal
W}|$ is polynomial in the input size and in $1\over \eps$,
whereas $|{\cal C}|$ may be exponential in $1\over \eps$,
specifically, $|{\cal C}| \leq ({1\over \eps^5}+1)^{1/\eps}$,
since in configuration there are up to $1\over \eps$ large items
of $|H| \leq{ 1\over \eps^5}$ types. We denote the set of windows
that are actual main windows of at least one configuration by
${\cal {W}'}$. Similarly to the previous section, we define a
linear program that allows the usage of any window in $\W$ and
later modify the linear program and the solution to this linear
program (that we obtain) to use only windows of ${\cal W}'$.

We define a generalized configuration $\tilde{C}$ as a pair
$\tilde{C}=(C,w=\frac 1{(1+\eps)^t})$, for some configuration
$C$, and some $w \in \W$. The generalized configuration
$\tilde{C}$ is valid if $w\leq w(C)$. The set of all valid
generalized configurations is denoted by $\tilde{\C}$. For every
$W\in {\cal W}$ denote by $C(W)$ the set of generalized
configurations such that $W$ is their window, i.e., $C(W)=\{
\tilde{C}=(C,w)\in {\tilde{\C}} : w=W\}$.

We next consider the following linear program.   In this linear
program we have a variable $x_{\tilde{C}}$ denoting the number of
bins with the generalized configuration $\tilde{C}$,  variables
$Y_{i,W}$ indicating if the small item $i$ is packed in a window
of type $W$, and variables $z_i$ indicating that item $i$ is
rejected (for both small and large items).

\begin{eqnarray}
\min & \sum\limits_{{\tilde{C}}\in {{\tilde{\C}}}} x_{\tilde{C}}+\sum\limits_{i\in L'\cup S} r'_iz_i& \nonumber \\
s.t.& \sum\limits_{{\tilde{C}=(C,w)}\in {{\tilde{\C}}}}n(v,C) x_{\tilde{C}} +\sum\limits_{j: type(j)=v} z_j  \geq n(v) & \forall v\in H\label{c1}\\
& \sum\limits_{W\in {\cal W}} Y_{i,W} +z_i \geq 1& \forall i\in S\label{c2}\\
& W\cdot \sum\limits_{{\tilde{C}}\in C(W)} x_{\tilde{C}}  \geq \sum\limits_{i\in S} s'_i\cdot Y_{i,W} & \forall W\in {\cal W}\label{c3}\\
& x_{\tilde{C}} \geq 0& \forall {{\tilde{C}}\in {{\tilde{\C}}}} \nonumber \\
& Y_{i,W} \geq 0&  \forall W\in {\cal W}, \forall i \in
I''\nonumber \\
& z_i \geq 0& \forall i \in I''.\nonumber
\end{eqnarray}
Constraints (\ref{c1}) and (\ref{c2}) ensure that each item (large
or small) of $I''$ is packed or rejected by the solution.
Constraints (\ref{c3}) ensure that the total size of the small
items that we decide to pack in a window of type $W$ is not larger
than the total available space allocated to the windows of small
items, that is, the number of bins that are packed according to a
generalized configuration that has this window is large enough.
We note that $\opt(I'')$ implies a feasible solution to the above
linear program that has the cost $\opt(I'')$, since the packing
of the small items (including the specification of the subset of
rejected items) clearly satisfies the constraints (\ref{c3}), and
the packing of large items (including the specification of the
subset of rejected items) satisfies the constraints (\ref{c1}).
Moreover, it implies a solution to the linear program in which
all variables $x_{\tilde{C}}$, that correspond to generalized
configurations $\tilde{C}=(C,w)$, for which $w$ is not the main
window of $C$, are equal to zero, and all variables $Y_{i,w}$
where $w \notin \W'$ are equal to zero as well.

Once again (similarly to the AFPTAS for \pr) we invoke the column
generation technique of Karmarkar and Karp \cite{KK82} as follows.
The above linear program has exponential number of variables and
polynomial number of constraints (neglecting the non-negativity
constraints). Instead of solving the linear program we solve its
dual program (that has a polynomial number of variables and an
exponential number of constraints). The variables $\alpha_v$
correspond to the item types in $H$, their intuitive meaning can
be seen as weights of these items.  The variables $\beta_i$
correspond to the small items, and their intuitive meaning can be
seen as weights of these items.  For each $W\in {\cal W}$ we have
a dual variable $\gamma_W$.  Using these dual variables, the dual
linear program is as follows.

\begin{eqnarray}
\max & \sum\limits_{v \in H} n(v) \alpha_v+ \sum\limits_{i\in S} \beta_i&\nonumber \\
s.t. & \sum\limits_{v\in H} n(v,C) \alpha_v + w\cdot \gamma_w \leq
1& \forall {\tilde{C}=(C,w)} \in {\tilde{\C}}\label{d1}\\
& \beta_i - s'_i\gamma_W \leq 0&\forall i\in S, \ \forall W\in {\cal W}\label{d2}\\
& \beta_i \leq r'_i& \forall i\in S\label{d3}\\
&\alpha_v \leq r'_i& \forall v\in H,\ \forall i\in L': type(i)=v
\label{d4}\\
 & \alpha_v \geq 0& \forall v\in H\nonumber
\\
& \beta_i \geq 0& \forall i\in S\nonumber\\
 &\gamma_W \geq 0&
\forall W\in {\cal W}.\nonumber
\end{eqnarray}
First note that there is a polynomial number of constraints of
type (\ref{d2}), (\ref{d3}) and (\ref{d4}), and therefore we
clearly have a polynomial time separation oracle for these
constraints. If we would like to solve the above dual linear
program (exactly) then using the ellipsoid method we need to
establish the existence of a polynomial time separation oracle for
the constraints (\ref{d1}).  However, we are willing to settle on
an approximated solution to this dual program. To be able to apply
the ellipsoid algorithm, in order to solve the above dual problem
within a factor of $1+\eps$, it suffices to show that there exists
a polynomial time algorithm (polynomial in $n$, $1\over \eps$ and
$\log\frac 1{ s'_{min}}$) such that for a given solution
$a^*=(\alpha^*,\beta^*,\gamma^*)$ decides whether $a^*$ is a
feasible dual solution (approximately).

That is, it either provides a generalized configuration
$\tilde{C}=(C,w) \in \tilde{\C}$ for which $\sum\limits_{v\in H}
n(v,C) \alpha^*_v + w\gamma^*_w> 1$, or outputs that an
approximate infeasibility evidence does not exist, that is, for
all generalized configurations $\tilde{C}=(C,w) \in \tilde{\C}$,
$\sum\limits_{v\in H} n(v,C) \alpha^*_v + w\gamma^*_w \leq 1+\eps$
holds. In such a case, $a^*\over 1+\eps$ is a feasible dual
solution that can be used.

Such a configuration $\tilde{C}$ can be found by the following
procedure:  For each $W \in {\cal W}$ we look for a configuration
$C \in \C$ such that $(C,W)$ is a valid generalized configuration,
and $\sum\limits_{v\in H} n(v,C) \alpha^*_v$ is maximized. If a
configuration $C$ that is indeed found, the generalized
configuration, whose constraint is checked, is $(C,W)$.

To find $C$, we invoke an FPTAS for the standard knapsack problem
with the following input: The set of items is $H$ where for each
$v\in H$ there is a volume $\alpha^*_v$ and a size $v$, the goal
is to pack a multiset of the items (an item can appear at most a
given number of times), so that the total volume is maximized,
under the condition that the total (rounded-up) size of the
multiset should be smaller than $1-\frac{W}{1+\eps}$, unless
$W<s'_{min}$, where the total size should be at most 1 (in this
case, the window leaves space only for items of size zero). Since
the number of applications of the FPTAS for the knapsack problem
is polynomial (i.e., $|{\cal W}|$), this algorithm runs in
polynomial time.

 If it finds a solution,
that is, a configuration $C$, with total volume greater than
$1-W\gamma^*_W$, we argue that $(C,W)$ is indeed a valid
generalized configuration, and this implies that there exists a
generalized configuration, whose dual constraint (\ref{d1}) is
violated.

By the definition of windows, the property $W<s'_{min}$ is
equivalent to $W=\frac{s'_{min}}{1+\eps}$, which is the smallest
size of window (which forms a valid generalized configuration with
any configuration).  If $W\geq s'_{min}$, recall that the main
window of $C$, $w(C)$ is chosen so that
$s'(C)+{w(C)} \geq 1$, and that $C$ is chosen by the algorithm for
the knapsack problem, so that $s'(C)<1-\frac{W}{1+\eps}$. We get
$1-w(C) \leq s'(C)<1-\frac{W}{1+\eps}$ and therefore $W <
(1+\eps)w(C)$, i.e.,  $W \leq w(C)$ (since the sizes of windows
are integer powers of $1+\eps$), so we conclude that $(C,W)$ is  a
valid generalized configuration.
Thus in this case we found that this solution is a configuration
whose constraint in the dual linear program is not satisfied, and
we can continue with the application of the ellipsoid algorithm.

Otherwise, for any window $W$, any configuration $C$ of total
rounded-up size less than $1-\frac{W}{1+\eps}$ (or at most 1, if
$W<s'_{min}$), has a volume of at most
$(1+\eps)(1-W\gamma^*_W)\leq (1+\eps)-W\gamma^*_W$. We prove that
in this case, all the constraints of the dual linear program are
satisfied by the solution $a^*\over 1+\eps$. Consider a valid
generalized configuration $\tilde{C}=(C,\tilde{w})$. We have
$\tilde{w}\leq w(C)$, where $w(C)$ is the main window of $C$. If
$w(C)<s'_{min}$, then $\tilde{w}=w(C)$. Since $s'(C)\leq 1$ for
any configuration, $C$ is a possible configuration to be used with
the window $\tilde{w}$ in the application of the FPTAS for
knapsack. Assume next that $\tilde{w} <1$, then when the FPTAS for
knapsack is applied on $W$, $C$ is a configuration that is taken
into account for $W$ since $s'(C)<1-\frac{w(C)}{1+\eps}\leq
1-\frac{\tilde{w}}{1+\eps}$, where the first inequality holds by
definition of $w(C)$. If $\tilde{w} =1$ then $1 \geq w(C) \geq
\tilde{w} =1$, so $w(C)=1$. A configuration $C_1$ that contains at
least one large item satisfies $s'(C_1)\geq \eps$, so
$s'(C_1)+\frac{1}{1+\eps} \geq \frac{1+\eps+\eps^2}{1+\eps}>1$.
Therefore if the main window of a configuration is of size 1, this
configuration is empty. We therefore have that $C$ is an empty
configuration, thus $s'(C)=0$. This empty configuration is
considered with any window.

We denote by $(X^*,Y^*,Z^*)$ the solution to the primal linear
program that we obtained.  Since its cost is a $(1+\eps)$
approximation for the optimal solution to the linear program, we
conclude that $\sum\limits_{{\tilde{C}\in
{\tilde{\C}}}}X^*_{\tilde{C}} + \sum\limits_{i\in L'\cup S}
r'_iZ^*_i \leq (1+\eps)\opt(I'')$.

We modify the solution to the primal linear program, into a
different feasible solution of the linear program, without
increasing the objective function. We create a list of generalized
configurations whose $X^*$ component is positive. From this list
of generalized configurations, we find a list of windows that are
the main window of at least one  configuration induced by a
generalized configuration in the list. This list of windows is a
subset of $\cal{W}'$ defined above. We would like the solution to
use only windows from $\cal{W}'$.

We modify the $X^*$ and $Y^*$ components, while the $Z^*$
components are not modified. The new solution will have the
property that any non-zero components of $X^*$, $X^*_{\tilde{C}}$
corresponds to a generalized configuration $\tilde{C}=(C,w)$, such
that $w\in \cal{W}'$. We still allow generalized configurations
$\tilde{C}=(C,w)$ where $w$ is not the main window of $C$, as long
as $w\in \W'$. This is done in the following way. Given a window
$w' \notin \W'$, we define $B_{w'}=\sum\limits_{\tilde{C''}\in
C(w')}X^*_{\tilde{C''}}$. The following is done in parallel for
every generalized configuration $\tilde{C'}=(C,w')$, where
$w'\notin \W'$ and such that $X^*_{\tilde{C'}}>0$, where the main
window of $C$ is $w \geq w'$ (but $w'\neq w$). We let
$\tilde{C}=(C,w)$. The windows allocated for small items need to
be modified first, thus an amount of
$\frac{X^*_{\tilde{C'}}}{B_{w'}}Y_{i,w'}$ is transferred from
$Y_{i,w'}$ to $Y_{i,w}$. We modify the values $X^*_{\tilde{C'}}$
and $X^*_{\tilde{C}}$ as follows. We increase the value of
$X^*_{\tilde{C}}$ by an additive factor of $X^*_{\tilde{C'}}$ and
let $X^*_{\tilde{C'}}=0$.

To show that the new vector $(X^*,Y^*,Z^*)$ still gives a feasible
solution of the same value of objective function, we consider the
modifications. The sum of components of $X^*$ does not change at
all in the above process, thus the value of the objective function
is the same. Moreover, for every configuration $C$, the sum of
components $X^*$, that correspond to  generalized configurations
whose configuration of large items is $C$, does not change. Thus
the constraints (\ref{c1}) still hold. We next consider the
constraint (\ref{c2}) for $i$, for a given small item $i\in S$.
Since the sum of variables $Y^*_{i,W}$ does not change, this
constraint still holds.

As for constraints (\ref{c3}), for a window $w \notin \cal{W'}$,
the right hand side of each such constraint became zero. On the
other hand, for windows in $\cal{W'}$, every increase in some
variable $X^*_{\tilde{C}}$ for $\tilde{C}=(C,w)$, that is
originated in a decrease of $X^*_{\tilde{C'}}$ for
$\tilde{C'}=(C,w')$ is accompanied with an increase of
$\frac{X^*_{\tilde{C'}}}{\sum\limits_{\tilde{C''}\in
C(w')}X^*_{\tilde{C''}}}Y^*_{i,w'}=\frac{X^*_{\tilde{C'}}}{B_{w'}}Y^*_{i,w'}
$ in $Y^*_{i,w}$, for every $i\in S$, thus is, an increase of
$\sum\limits_{i\in S}\frac{x^*_{\tilde{C'}}}{B_{w'}} s'_i\cdot
Y^*_{i,w'}$ in the right hand size of the constraint (\ref{c3})
for $w$, and an increase of $w \cdot x^*_{\tilde{C'}}$ in the left
hand side. Since we have $w \cdot B_{w'} \geq w' \cdot B_{w'} \geq
\sum\limits_{i\in S} s'_i\cdot Y^*_{i,w'}$ before the modification
occurs (since constraint (\ref{c3}) holds for the solution before
modification for the window $w'$), we get that the increase of the
left hand side is no smaller than the increase in the right hand
side.

Now, we can delete the constraints of (\ref{c3}) that correspond
to windows in ${\cal W} \setminus {\cal W}'$.  In the resulting
linear program we consider a basic solution that is not worse than
the solution we obtained above. Such  a basic solution can be
found in polynomial time.  We denote this basic solution by
$(\X^*,\Y^*,\Z^*)$.

We apply several steps of rounding to obtain a feasible packing
of the items into bins.  We first round up $\X^*$.  That is,
denote by $\hat{X}$ the vector such that
$\hat{X}_{\tilde{C}}=\lceil \X^*_{\tilde{C}} \rceil$ for all
$\tilde{C}\in {\tilde{\cal C}}$.  Moreover, each small item $i\in
S$ such that $(\Y^*_{i,W})_{W\in {\cal W}}$ is fractional, is
packed using a dedicated bin. We modify the value of
$\hat{x}_{\tilde{C}}$ for $\tilde{C}$ that corresponds to an
empty configuration $C$, together with the window $1$, to reflect
the additional bins that accommodate the small items that were
previously packed fractionally. We modify the values
$(\Y^*_{i,W})_{W\in {\cal W}}$ so that every item $i$ which is
packed into a new bin has $\hat{Y}_{i,W}=0$ for all $W$ except
for $W=1$, for which $\hat{Y}_{i,W}=1$. For all other variables
$Y_{i,W}$ we define $\hat{Y}_{i,W}=\Y^*_{i,W}$. For every $i\in
I'$ we let $\hat{Z_i}=\Z^*_i$.

We next bound the increase in the cost due to this rounding.

\begin{lemma}\label{ll1}
$\sum\limits_{C\in {\cal C}} \hat{X}_C\leq \sum\limits_{C\in {\cal
C}}\X^*_C + |H|+|{\cal W'}|$.
\end{lemma}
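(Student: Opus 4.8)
The plan is to transcribe the proof of Lemma~\ref{l1}, keeping track of the two structural differences between the present linear program and the one for \pr: the window constraints form a single family (\ref{c3}) here, rather than the two families (\ref{cons3})--(\ref{cons4}), and there are additional rejection variables $z_i$. First I would pass to the restricted primal linear program, in which the constraints (\ref{c3}) and the variables $x_{\tilde{C}}$, $Y_{i,W}$ are kept only for windows $W\in\W'$; as constructed above, $(\X^*,\Y^*,\Z^*)$ is a basic feasible solution of this program. It has exactly $|H|+|S|+|\W'|$ inequality constraints (one of type (\ref{c1}) per $v\in H$, one of type (\ref{c2}) per $i\in S$, one of type (\ref{c3}) per $W\in\W'$), so a basic solution has at most $|H|+|S|+|\W'|$ positive variables, and every positive variable is basic.

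Next I would charge the gain $\sum_{\tilde{C}}\hat{X}_{\tilde{C}}-\sum_{\tilde{C}}\X^*_{\tilde{C}}$ to positive variables of $(\X^*,\Y^*,\Z^*)$. Rounding $\X^*$ up contributes strictly less than $1$ per fractional (positive, non-integral) component $\X^*_{\tilde{C}}$ and nothing per integral component, and the dedicated bins opened for the small items whose vector $(\Y^*_{i,W})_W$ has more than one positive component contribute one bin each; since these bins are folded into $\hat{X}_{(\emptyset,1)}$, the total gain is at most the number of fractional components of $\X^*$ plus the number of such small items. So it suffices to bound this sum by $|H|+|\W'|$.

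For the count, partition the positive variables of the basic solution into the positive components of $\X^*$, the positive $z_j$ with $j\in L'$, and the positive variables lying in $\bigcup_{i\in S}\big(\{Y_{i,W}:W\in\W'\}\cup\{z_i\}\big)$; these three sets are disjoint and exhaust the positive variables, so their sizes sum to at most $|H|+|S|+|\W'|$. Every fractional component of $\X^*$ lies in the first set. For each $i\in S$, constraint (\ref{c2}) forces at least one variable in $\{Y_{i,W}:W\in\W'\}\cup\{z_i\}$ to be positive, and if $(\Y^*_{i,W})_W$ has more than one positive component, i.e. exactly when item $i$ receives a dedicated bin, then at least two such variables are positive; since the blocks $\{Y_{i,W}\}_W\cup\{z_i\}$ are disjoint over $i\in S$, the third set has size at least $|S|$ plus the number of small items receiving a dedicated bin. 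Discarding the non-negative size of the second set and cancelling $|S|$ gives the claimed inequality.

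This is essentially a verbatim adaptation of Lemma~\ref{l1}; the only delicate point is the bookkeeping of the rejection variables, namely checking that the large-item variables $z_j$ ($j\in L'$) occupy a block of their own and hence cannot undermine the count, and that each small item contributes at least one, and each dedicated-bin small item at least two, positive variables to its own instance of constraint (\ref{c2}). The improvement of the additive term from the $|H|+2|\W'|$ of Lemma~\ref{l1} to $|H|+|\W'|$ is precisely the effect of having a single family of window constraints rather than two.
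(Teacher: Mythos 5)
Your proof is correct and follows essentially the same route as the paper's: restrict to the linear program with windows in ${\cal W}'$ only, use that a basic solution has at most $|H|+|S|+|{\cal W}'|$ positive variables, and charge the rounding gain to the fractional $x$-components and the dedicated bins, with each small item absorbing one unit of the budget via its constraint (\ref{c2}); your explicit separation of the large-item variables $z_j$, $j\in L'$, into their own block is a detail the paper leaves implicit, and it only tightens the count. The one slip is your claim that a dedicated bin is opened exactly when $(\Y^*_{i,W})_W$ has more than one positive component: the paper's rounding opens a dedicated bin whenever this vector is fractional, which also covers an item with a single fractional component $Y_{i,W}$ compensated by $z_i>0$. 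This does not damage the argument, since such an item still has at least two positive variables in its block ($Y_{i,W}$ and $z_i$), so your bookkeeping gives each dedicated-bin item the required two units and the bound $|H|+|{\cal W}'|$ follows exactly as you state.
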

\begin{proof}
Consider now the primal linear program, where constraints
(\ref{c3}) exist only for windows in $\W'$, the variables
$x_{\tilde{C}}$ exist only for generalized configurations
$\tilde{C}=(C,w)$ where $w \in \W'$, and the variables $Y_{i,W}$
exists only for $W\in\W'$. The basic solution $(\X^*,\Y^*,\Z^*)$
is a feasible solution for this linear program. In the primal
linear program there are $|H|+|{\cal W'}|+|S|$ inequality
constraints, and hence in a basic solution there are at most
$|H|+|{\cal W'}|+|S|$ basic variables. For every $i\in S$, there
is at least one variable associated with $i$ that is a basic
variable. This variable is either $Z_i$ or $Y_{i,W'}$ for some
$W'$, and therefore the number of basic variables from the $x$
components and additional basic variables from the $Y$ and $Z$
components is at most $|H|+|{\cal W'}|$. Hence the sum of the
number of fractional components among the $x_{\tilde{C}}$
variables, and the number of small items such that the vector
$(\Y^*_{i,W})_W$ is non-integral is at most $|H|+|{\cal W'}|$.
This is an upper bound on the difference in the objective values
of the two solutions and the claim follows.
\end{proof}


Our scheme returns a solution that packs $\hat{X}_{\tilde{C}}$
bins with configuration $\tilde{C}$. Each large item of the
rounded-up instance is replaced by the corresponding item of
$I$.  We clearly use at most $\sum\limits_{\tilde{C}\in
{\tilde{\C}}} \hat{X}_{\tilde{C}}$ bins in this way. We next pack
each item of $L_1$ by its own bin (if $L_1$ is non-empty). We
denote the resulting solution by $SOL_{large}$.

\begin{lemma}The cost of $SOL_{large}$ is at most
$\sum\limits_{\tilde{C}\in {\tilde \C}}\hat{X}_{\tilde{C}}
+\sum\limits_{i\in I'} r'_i \hat{Z}_i+ \eps\opt$.
\end{lemma}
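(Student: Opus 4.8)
The plan is to follow the proof of the analogous lemma in Section~\ref{pr}. By construction $SOL_{large}$ uses at most $\sum_{\tilde C\in\tilde\C}\hat X_{\tilde C}$ bins to host the (rounded) large items according to the generalized configurations, one extra bin for each item of $L_1$, and pays $\sum_{i\in I'}r'_i\hat Z_i$ for the items it rejects; hence its cost is at most $\sum_{\tilde C\in\tilde\C}\hat X_{\tilde C}+\sum_{i\in I'}r'_i\hat Z_i+|L_1|$, and it suffices to prove $|L_1|\le\eps\opt$.

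Recall that $|L_1|\le 2\eps^3|L|$. In contrast with \pr, a large item of $I$ may be rejected by an optimal solution, so one cannot simply lower bound $\opt$ by the total size of the large items. Instead I would use that in \bpr\ a large item $j$ satisfies \emph{both} $s_j\ge\eps$ and $r_j\ge\eps$. Fix an optimal solution for $I$ and partition $L$ into the set of large items it packs and the set of large items it rejects. The number of bins it uses is at least the total size of all items it packs, hence at least $\eps$ times the number of packed large items; its total rejection penalty is at least the total penalty of all items it rejects, hence at least $\eps$ times the number of rejected large items. Adding these two inequalities yields $\opt\ge\eps|L|$.

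Combining the two estimates, $|L_1|\le 2\eps^3|L|\le 2\eps^2\opt\le\eps\opt$, the last step using $\eps\le\frac13$. This gives the claimed bound on the cost of $SOL_{large}$. The argument is routine; the only place that needs a little care is the bound $\opt\ge\eps|L|$, which must be obtained by charging packed large items against the number of bins and rejected large items against the rejection penalty separately, rather than against the bin count alone as in the cardinality-constrained case.
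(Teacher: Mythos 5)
Your proof is correct and follows essentially the same route as the paper: both reduce the lemma to $|L_1|\le\eps\opt$ via $|L_1|\le 2\eps^3|L|$ and the bound $\opt\ge\eps|L|$, which the paper asserts directly from the fact that every item of $L$ has both size and rejection penalty at least $\eps$. Your explicit split of the large items into packed ones (charged to the bin count) and rejected ones (charged to the penalty) is just the spelled-out justification of that same one-line assertion.
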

\begin{proof}
It suffices to show that $|L_1| \leq \eps\opt$.  To see this last
claim note that $|L_1|\leq 2|L|\eps^3$ and each item in $L$ has
both a size of at least $\eps$ and a rejection cost of at least
$\eps$ and therefore $\opt \geq |L|\eps$, and therefore $|L_1|
\leq 2\eps^2 \opt$ and the claim follows since $\eps <{1/2}$.
\end{proof}

\begin{corollary}
The cost of $SOL_{large}$  is at most $(1+2\eps)\opt + |H|+|{\cal
W'}|$.
\end{corollary}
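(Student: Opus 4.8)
The plan is to collect the inequalities already in hand. First I would invoke the lemma immediately above, which states that the cost of $SOL_{large}$ is at most $\sum_{\tilde{C}\in\tilde{\C}}\hat{X}_{\tilde{C}} + \sum_{i\in I'} r'_i\hat{Z}_i + \eps\opt$ (the $\eps\opt$ absorbing the $|L_1|$ bins opened for the items of $L_1$). So it suffices to bound $\sum_{\tilde{C}\in\tilde{\C}}\hat{X}_{\tilde{C}} + \sum_{i\in I'} r'_i\hat{Z}_i$ by $(1+\eps)\opt + |H| + |\W'|$.

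Next I would apply Lemma~\ref{ll1}, which bounds the total rounding overhead in the configuration variables --- an overhead that, by construction of $\hat X$, already includes the extra bins opened for the small items whose $\Y^*$-vector came out fractional --- by $|H| + |\W'|$, that is, $\sum_{\tilde{C}}\hat{X}_{\tilde{C}} \le \sum_{\tilde{C}}\X^*_{\tilde{C}} + |H| + |\W'|$. Since the rounding step does not touch the rejection variables, $\hat{Z}_i = \Z^*_i$ for every $i$, so the quantity to be bounded is at most $\sum_{\tilde{C}}\X^*_{\tilde{C}} + \sum_{i\in I'} r'_i\Z^*_i + |H| + |\W'|$.

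Finally I would recall that $(\X^*,\Y^*,\Z^*)$ is a basic solution whose objective value is no larger than that of the $(1+\eps)$-approximate primal solution $(X^*,Y^*,Z^*)$ --- the window-relocation step preserves the objective exactly, and passing to a basic solution of the reduced program does not increase it --- and that this approximate solution has cost at most $(1+\eps)\opt(I'')$. Hence $\sum_{\tilde{C}}\X^*_{\tilde{C}} + \sum_{i\in I'} r'_i\Z^*_i \le (1+\eps)\opt(I'')$. Combining this with $\opt(I'') \le \opt(I') \le \opt$ (the first inequality noted right after the definition of $I''$, the second being the cited Lemma~1 of~\cite{Eps06}) gives $\sum_{\tilde{C}}\X^*_{\tilde{C}} + \sum_{i\in I'} r'_i\Z^*_i \le (1+\eps)\opt$. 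Substituting back yields cost of $SOL_{large} \le (1+\eps)\opt + |H| + |\W'| + \eps\opt = (1+2\eps)\opt + |H| + |\W'|$, as claimed.

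I do not expect a genuine obstacle here: this corollary is a bookkeeping step that merely chains the preceding lemma, Lemma~\ref{ll1}, and the LP approximation guarantee. The only points requiring care are the overloaded notation (distinguishing the intermediate approximate solution $X^*$ from the post-modification basic solution $\X^*$) and the observation that Lemma~\ref{ll1} already absorbs both the ceiling operation on $\X^*$ and the dedicated bins for the fractionally packed small items, so that no further additive term is incurred.
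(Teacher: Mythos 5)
Your chain of inequalities -- the preceding lemma, then Lemma~\ref{ll1} together with $\hat{Z}=\Z^*$, then the fact that the basic solution's objective is no larger than that of the $(1+\eps)$-approximate LP solution, and finally $\opt(I'')\leq\opt(I')\leq\opt$ -- is exactly the bookkeeping the paper intends for this corollary (which it leaves without an explicit proof). Your argument is correct and takes essentially the same route.
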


We next consider the packing of the small items that are supposed
to be packed (according to $\hat{Y}$) in bins with window $W$.
Assume that there are $X(W)$ such bins (i.e.,
$X(W)=\sum\limits_{\hat{C}:\tilde{C}=(C,W)}
\tilde{X}_{\tilde{C}}$). Denote by $S(W)$ the set of small items
that we decided to pack in bins with window $W$ (for some of
these items we will change this decision in the sequel). Then, by
the feasibility of the linear program we conclude that
 $\sum\limits_{i\in S(W)} s'_i \leq W \cdot X(W)$.
We allocate almost all $S(W)$ to the $X(W)$ bins with window $W$
such that the total size of the items in each such bin is at most
$1+\frac{\eps W}{1+\eps}$, exactly as in the previous section.
As in the previous section, we create an intermediate solution
$SOL_{inter}$ by removing the largest small item from each such
bin. Each removed item is small and therefore either its rejection
penalty is at most $\eps$ or its size is at most $\eps$. We pack
the removed small items with size at most $\eps$ in new bins,
packing $1\over \eps$ such items in a bin, except perhaps the last
such bin, and the other removed items are rejected (incurring a
rejection penalty of at most $\eps$ for each such item).
  The
total cost of this intermediate solution is therefore at most
$(1+\eps)\cdot \left( (1+2\eps)\opt + |H|+|{\cal W'}| \right) +1$
(the last bin can contain less than $1\over \eps$ such removed
items).  As in the previous section, after the largest small items
is removed from each bin, the total size of small items assigned
to such bin is at most $W$.

The intermediate solution is infeasible because our definition of
$W$ is larger than the available space for small items in such
bin. We create the final solution $SOL_{final}$ using the same
process as in the previous section. That is, given a bin such that
the intermediate solution packs to it large items according to
configuration $C$, and small items with total size at most $w(C)$,
we remove the small items and pack them back until the first item
that causes an excess. The first items whose rejection penalties
are smaller than $\eps$ are rejected. The other ones are packed in
separate bins (each such separate bin will contain $1\over \eps$
such first items for different bins of $SOL_{inter}$). Similarly
to the above argument these are feasible bins and they add an
additive factor of $\eps$ times the cost of $SOL_{inter}$ to the
total cost of the packing (plus 1). The remaining unpacked items
in a bin of $SOL_{inter}$ have total size of at most $\frac{\eps
W}{1+\eps} \leq \eps$. Therefore, we can pack the unpacked items
of $1\over \eps$ bins of $SOL_{inter}$ using one additional bin.
In this way we get our final solution $SOL_{final}$.  We note
that the cost of $SOL_{final}$ is at most $(1+2\eps)$ times the
cost of $SOL_{inter}$ plus one.  Therefore the cost of
$SOL_{final}$ is at most $ (1+2\eps) \left( (1+\eps)\cdot \left(
(1+2\eps)\opt + |H|+|{\cal W}| \right) +1 \right) +1 \leq
(1+10\eps)\opt + 4(|H|+|{\cal W}|+1) \leq (1+10\eps)\opt +
4{1\over \eps^5}+ 4({1\over \eps^5}+1)^{1/\eps} +1$ where the
last inequality holds by $\eps < {1\over 3}$, $|H|\leq {1\over
\eps^5}$ and $|{\cal W}|\leq |{\cal C}| \leq ({1\over
\eps^5}+1)^{1/\eps}$  Therefore, we have established the
following theorem.

\begin{theorem}
The above scheme is an AFPTAS for  \bpr.
\end{theorem}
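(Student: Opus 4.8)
The plan is to chain together the bounds already derived in this section into a single inequality of the form $\apx\le(1+10\eps)\opt+c(\eps)$ with $c(\eps)$ depending on $\eps$ alone, and then to check that the whole construction runs in time polynomial in $n$ and $\tfrac1\eps$. For the approximation guarantee I would proceed step by step through the reductions: by Lemma~1 of \cite{Eps06} the penalty rounding gives $\opt(I')\le\opt(I)=\opt$, and discarding the groups $L_1=\bigcup_i L^i_1$ after linear grouping gives $\opt(I'')\le\opt(I')$. The Karmarkar--Karp column generation then produces a solution of the configuration LP with $\sum_{\tilde C}X^*_{\tilde C}+\sum_{i\in L'\cup S}r'_iZ^*_i\le(1+\eps)\opt(I'')$; the clean-up step, which moves weight only between generalized configurations that share the same configuration of large items, preserves both feasibility and the objective, and passing to a basic solution of the reduced LP does not increase it. Rounding the $\X^*$-variables up and extracting the small items whose $\Y$-vector is fractional costs an additional $|H|+|\W'|$ bins by Lemma~\ref{ll1}; packing $L_1$ one item per bin costs at most $\eps\opt$ more; forming $SOL_{inter}$ by pulling out the largest small item of every bin (and rejecting or repacking those items) multiplies the cost by $1+\eps$ and adds one bin; and forming $SOL_{final}$ by repacking each bin's small items greedily into the true free space multiplies by a further $1+2\eps$ and adds one bin. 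Finally I would invoke Lemma~1 of \cite{Eps06} once more to pass from the rounded penalties $r'$ back to the true penalties $r$ (losing one more factor $1+\eps$ on the cost) and replace every rounded large item by its original, which only shrinks sizes and hence keeps every bin feasible. Collecting terms yields $\apx\le(1+10\eps)\opt+c(\eps)$ with $c(\eps)=O\!\bigl((\tfrac1{\eps^5}+1)^{1/\eps}\bigr)$.

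To deduce the AFPTAS, given a target accuracy $\eps'>0$ I would run the scheme with a value of $\eps$ for which $\tfrac1\eps$ is an integer and $10\eps\le\eps'$; then $\apx\le(1+\eps')\opt+c$ with $c$ a constant independent of the input, so the asymptotic approximation ratio is $1+\eps'$, as required.

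For the running time, which is where the real work lies, I would verify the following. The number of item types is $|H|\le\tfrac1{\eps^5}$ and the number of candidate windows is $|\W|\le n\bigl(\log_{1+\eps}\tfrac1{s'_{min}}+2\bigr)$, both polynomial in $n$ and $\tfrac1\eps$; the configuration LP, though it has exponentially many variables, is never written out, and its dual is solved approximately by the ellipsoid method in polynomially many iterations, each of which calls the separation oracle once. The oracle checks the polynomially many constraints (\ref{d2}), (\ref{d3}), (\ref{d4}) directly and, for (\ref{d1}), for each of the $|\W|$ windows $W$ runs a single call of the knapsack FPTAS of \cite{CKPP} on the type set $H$ with size cap $1-\tfrac{W}{1+\eps}$ (or $1$ when $W<s'_{min}$); each such call is polynomial in $n$ and $\tfrac1\eps$, and the case analysis already given shows the oracle is correct up to the $1+\eps$ slack the ellipsoid method needs. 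Consequently the approximate primal solution is obtained in polynomial time and, being supported only on the configurations generated during the run, has polynomially many nonzero components; converting it to a basic solution of the reduced LP, rounding up, the window clean-up, the round-robin assignment of the small items, and the construction of $SOL_{inter}$ and $SOL_{final}$ all touch only polynomially many bins and items, so the scheme as a whole runs in time polynomial in $n$ and $\tfrac1\eps$.

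The main obstacle I anticipate is making the separation-oracle step fully rigorous: one must confirm that imposing the size cap $s'(C)<1-\tfrac{W}{1+\eps}$ in the knapsack subproblem corresponds exactly to validity of the generalized configuration $(C,W)$ up to the error the ellipsoid method can absorb --- this is precisely the $\tilde w<1$ versus $\tilde w=1$ case split performed above --- and that the Karmarkar--Karp framework indeed returns a primal solution with polynomially bounded support, so that every subsequent rounding step stays polynomial. A secondary, more mechanical difficulty is carrying the several multiplicative $1+\eps$ and $1+2\eps$ factors and the additive $O(1)$ and $|H|+|\W'|$ terms through the chain of transformations $I\to I'\to I''$ and the chain of solutions $SOL_{large}\to SOL_{inter}\to SOL_{final}$, and confirming that the final additive term $c(\eps)$ depends on $\eps$ only.
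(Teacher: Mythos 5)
Your proposal is correct and follows essentially the same route as the paper: it chains exactly the same reductions ($I\to I'\to I''$, linear grouping, the window-based configuration LP solved approximately via column generation with the knapsack-FPTAS separation oracle, the clean-up to $\W'$, the basic-solution rounding bounded by Lemma~\ref{ll1}, and the $SOL_{large}\to SOL_{inter}\to SOL_{final}$ transformations) together with the same polynomial-time accounting. The only difference is that you make the final un-rounding of the rejection penalties (the extra $1+\eps$ factor from Lemma~1 of \cite{Eps06}) explicit, which slightly perturbs the stated constant $1+10\eps$ but is immaterial since you rescale $\eps$ at the end anyway.
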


\end{document}